\newtheorem{theorem}{Theorem}
\newtheorem{claim}{Claim}
\newtheorem{corollary}{Corollary}
\newtheorem{definition}{Definition}
\newtheorem{example}{Example}
\newtheorem{remark}{Remark}
\newenvironment{proof}[1][Proof]{\emph{#1.} }{\  \hfill $\square $ \vspace{5 pt}}
\newcommand*\samethanks[1][\value{footnote}]{\footnotemark[#1]}
\begin{document}

\title{A decomposition from a many-to-one matching market with path-independent choice functions to a one-to-one matching market\thanks{We thank anonymous referees for their very detailed comments. We also thank Marcelo Fernandez for his participation in the early stages of the paper.
We acknowledge the financial support of UNSL through grants 03-2016 and 03-1323, and from Consejo Nacional
de Investigaciones Cient\'{\i}ficas y T\'{e}cnicas (CONICET) through grant
PIP 112-200801-00655, and from Agencia Nacional de Promoción Cient\'ifica y Tecnológica through grant PICT 2017-2355.}}

\author{Pablo Neme\thanks{Instituto de Matem\'{a}tica Aplicada San Luis (UNSL-CONICET) and Departamento de Matemática, Universidad Nacional de San
Luis, San Luis, Argentina. Emails:  \texttt{pabloneme08@gmail.com} (P. Neme) and \texttt{joviedo@unsl.edu.ar} (J. Oviedo).}  \and Jorge Oviedo\samethanks[2]}

\date{\today}

\maketitle
\begin{abstract}
For a many-to-one market where firms are endowed with path-independent choice functions, based on the Aizerman-Malishevski decomposition, we define an associated one-to-one market. Given that the usual notion of stability for a one-to-one market does not fit well for this associated one-to-one market, we introduce a new notion of stability. This notion allows us to establish an isomorphism between the set of stable matchings in the many-to-one market and the matchings in an associated one-to-one market that meet this new stability criterion. Furthermore, we present an adaptation of the well-known deferred acceptance algorithm to compute a matching that satisfies this new notion of stability for the associated one-to-one market. Finally, as a byproduct of our isomorphism, we prove an adapted version of the so-called Rural Hospital Theorem.

\bigskip

\noindent \emph{JEL classification:} C78, D47.

\noindent \emph{Keywords:} Many-to-one matchings, Aizermann-Malishevski decomposition, one-to-one matchings, deferred acceptance algorithm, stable set 

\end{abstract}
\section{Introduction}
Many-to-one markets have been extensively studied in the literature, starting with the college admissions problem and extending to the assignment of medical interns to hospitals, as well as applications in labor markets. In parallel, one-to-one matching models---such as the classical marriage problem---have provided a foundational framework for analyzing stability in simpler settings. These two types of models are closely connected, and several results in the literature show how many-to-one environments can be decomposed into equivalent one-to-one representations. Such decomposition techniques have proven useful for transferring structural properties and solution concepts between the two frameworks, and for deepening our understanding of the relationship between different matching environments.

The Aizerman-Malishevski decomposition, a classical result in the choice literature \citep[e.g.,][]{Moulin1985}, states that any path-independent choice function can be represented as the union of linear orders over individuals. This decomposition has proven especially useful in matching theory. For instance, \citet{chamb2017choice} employ it to analyze many-to-many matching markets with contracts, showing how it enables the use of deferred acceptance algorithms under path-independent choice. In our context, this decomposition allows us to reinterpret a firm with a path-independent choice function as a collection of ``copies'' of itself, each endowed with strict preferences over workers. This transformation defines an associated one-to-one market whose structure reflects the substitutability encoded in the firms' original choices. However, since the copies of a single firm may rank workers differently, the classical notion of stability in one-to-one markets does not translate adequately. In Section~\ref{seccion models}, we illustrate this issue through Example~\ref{ejemplo}, where a many-to-one market with four stable matchings corresponds to an associated one-to-one market with only one stable matching under the usual definition. To bridge this gap, we propose a new notion of stability, which we call \emph{stability*}. While it retains the key elements of classical stability---individual rationality and absence of blocking pairs---it is designed to accommodate the structure induced by the decomposition. Under this refined notion, we prove that the sets of stable matchings in the many-to-one market and stable* matchings in an associated one-to-one market are isomorphic.

To highlight the difference between the usual notion and stability*, we introduce two additional conditions not typically required in one-to-one settings. First, for individual rationality* in this associated market, we impose an envy-free condition among firm-copies: no firm-copy matched to a worker should prefer another worker matched to a different copy of the same firm. Second, in the presence of blocking pairs, we require that the worker in the blocking pair is preferred by the relevant firm-copy over all workers matched to other copies of that same firm.

A common approach to prove the non-emptiness of a stable set is constructive. In the seminal paper by \cite{Gale1962}, an algorithm is presented ---the well-known deferred acceptance algorithm--- which constructs a stable matching for traditional one-to-one markets.
Although our isomorphism demonstrates that the set of stable* one-to-one matchings is non-empty, we adapt the deferred acceptance algorithm to a related one-to-one market, taking into account the specific characteristics of stability*. Although one-to-one markets are traditionally symmetric, our related one-to-one market is not. This is because it originates from the decomposition of path-independent choice functions. Thus, we present two adapted versions of the deferred acceptance algorithm: one where firm-copies propose and another where workers propose. We show that, whether the firm-copies or the workers are the proposers, the algorithm returns a stable* matching. In this way, independently of the isomorphism, we establish that the set of stable* matchings for the related one-to-one market is non-empty.

The idea of decomposing a many-to-one market into a one-to-one market has been previously studied under more restrictive assumptions. When assuming that firms have preferences and those preferences are assumed to be \emph{responsive} to an individual ranking of workers, \cite{rothsotomayot1989} demonstrate that each firm can be decomposed into identical units (or copies) according to its capacity (quotas $q$). A key distinction in this decomposition is that each of these copies shares the same individual preferences over workers (derived from responsive preferences), and workers rank all copies of a given firm above those of any other firm, preserving the same order of preferences across all copies. This decomposition transforms a many-to-one market with responsive preferences into a corresponding one-to-one market. Furthermore, \cite{rothsotomayot1989} establish an isomorphism between the stable matchings of a many-to-one market and those of an associated one-to-one market. 

The paper is organized as follows. In Section \ref{seccion models}, we present the many-to-one market, the Aizerman-Malishevski decomposition, and the associated one-to-one market. For this associated one-to-one market, in Section \ref{seccion resultados}, we present an adapted deferred acceptance algorithm, an isomorphism with the many-to-one market, and an adapted version of the Rural Hospital Theorem. Finally, in Section \ref{seccion final remaks} some final remarks are presented.

\section{Preliminaries}\label{seccion models}
This section contains two subsections. In Subsection \ref{subseccion mercado M-1}, we present the many-to-one market where firms are endowed with path-independent choice functions, and the respective notions of matching, individual rationality, and stability. In Subsection \ref{subseccion mercado 1-1}, we introduce the Aizerman-Malishevski decomposition of a path-independent choice function into linear orders, and we show a decomposition of the many-to-one market into an associated one-to-one market with its respective notions of matching, individual rationality*, and stability*.

\subsection{Many-to-one matching market}\label{subseccion mercado M-1}

We consider a many-to-one matching market where $\boldsymbol{\Phi} =\{\varphi _1,\dots ,\varphi _n\}$
denotes the set of $n$ firms and $\boldsymbol{{\mathcal{W}}}=\{w_1,\dots ,w_k\}$ the
set of $k$ workers. Each firm $\varphi \in \Phi $  is endowed with a choice function $C_\varphi:2^{\mathcal{W}}\to 2^{\mathcal{W}}$ such that for each $W\in \mathcal{W}$, $C_\varphi(W)\subseteq W.$ The idea behind a choice function is that if $W$ is an available set of workers, $C_\varphi(W)$ si the set that firm $\varphi$ chooses. Each worker $w\in {\mathcal{W}}$ has a strict, transitive, and
complete preference relation $P_{w}$ over ${\Phi }\cup \left \{\emptyset
\right \}$.  Let $R_w$ denote the weak preference relation associated with $P_w$, meaning that if $\varphi_iR_w\varphi_j$ imply that either $\varphi_iP_w\varphi_j$ or $\varphi_i=\varphi_j$. 
Firm $\varphi $'s choice function is \textbf{path-independent} if for $W,W'\in \mathcal{W}$ we have $C_\varphi(W\cup W')=C_\varphi(C_\varphi(W)\cup W').$\footnote{\cite{aizerman1981general} establishes that if a choice function is path-independent if and only if it satisfies \textbf{substitutability} and \textbf{consistency}. Substitutability states that if for each $ W\subseteq \mathcal{W}$ we have that $w\in C_{\varphi }(W)$ implies that $w\in C_{\varphi }(W\setminus \{w'\})$  for each $w'\in W\setminus \{w\}$. Consistency states that for $W,W'\in \mathcal{W},$ if $C_\varphi(W)\subseteq W'\subseteq W$ then $C_\varphi(W)=C_\varphi(W')$.  } Throughout the paper, we assume that firms in the many-to-one matching market have path-independent choice functions. We denote the \textbf{many-to-one matching market} as $\boldsymbol{\mathcal{M}}=(\Phi,\mathcal{W},C,P)$ where $\Phi$ is the set of firms, $\mathcal{W}$ is the set of workers, $C$ is the profile of choice functions for firms, and $P$ is the profile of preferences of workers. 


A many-to-one \textbf{matching} is a mapping $\mu : \Phi \cup {{\mathcal{W}}} \rightarrow 2^{\Phi \cup {{\mathcal{W}}}}$ that satisfies the following conditions: (i) $\mu(w) \in \Phi \cup {\emptyset}$ for each $w \in {{\mathcal{W}}}$, (ii) $\mu(\varphi) \in 2^{{{\mathcal{W}}}}$ for each $\varphi \in \Phi$, and (iii) $\mu(w) = \varphi$ if and only if $w \in \mu(\varphi)$. A matching $\mu$ is \textbf{blocked} by a worker $w$ if $\emptyset  P_{w}  \mu(w)$. Similarly, $\mu$ is blocked by a firm $\varphi$ if $\mu(\varphi) \neq C_{\varphi}(\mu(\varphi))$. We say that a matching is \textbf{individually rational} if it is not blocked by any individual agent. A matching $\mu$ is blocked by a worker-firm pair $(w, \varphi)$ if $w \notin \mu(\varphi)$, $w \in C_{\varphi}(\mu(\varphi) \cup \{w\})$, and $\varphi P_{w} \mu(w)$. Finally, we say that a matching $\mu$ is \textbf{stable} if it is not blocked by any individual agent or by any worker-firm pair.
We denote by $\boldsymbol{S({{\mathcal{M}}})}$ to the \textbf{set of all stable matchings} for the
matching market ${{\mathcal{M}}}$.

 \cite{martinez2008invariance} show that the set of stable matchings remains the same when an agent’s preference relation changes, as long as these two preference relations have the same induced choice function. Since it is well known that when preferences are substitutable, the set of stable matchings is non-empty, then, in our context, when firms have path-independent choice functions, the set of stable matchings is non-empty \cite[see][for more details]{chamb2017choice}.

\subsection{An associated one-to-one market}\label{subseccion mercado 1-1}

In this subsection, we present the \emph{Aizerman-Malishevski decomposition} \citep[see, for instance,][]{aizerman1981general,chamb2017choice}. This decomposition allows us to establish a connection between a many-to-one market with path-independent choice functions and a one-to-one market. It captures the essence of the path-independence (and, therefore, substitutability) on firms' choice functions by splitting each firm into multiple copies and making that each copy of a firm behaves independently of the others, meaning that the copies have different linear preferences over workers. 

\begin{description}

\item[Aizermann-Malishevski decomposition:] 
Choice function $C_{\varphi }$ is path-independent if and only if there is a finite sequence of preference relations 
$\{P_{\varphi j}\}_{j\in J_{\varphi }}$ on ${{\mathcal{W}}}$ 
such that, for all $S\subseteq {{\mathcal{W}}}$: 
\begin{equation*}
C_{\varphi }(S)=\bigcup_{j\in J_{\varphi }}\max_{S}P_{\varphi j}.
\end{equation*}

\end{description}

Given a firm $\varphi_i$ and its path-independent choice function $C_{\varphi_i}$, the decomposition is carried out as follows. Take $S\subseteq \mathcal{W}$ and begin with an arbitrary worker \( w_1 \in C_{\varphi_i}(S) \). Next, we choose a worker from \( C_{\varphi_i}(S \setminus \{w_1\}) \), say \( w_2 \). We continue iteratively by selecting \( w_k \in C_{\varphi_i}(S \setminus \{w_1, \ldots, w_{k-1}\}) \). In this way, we obtain the preference ordering \( P_{\varphi_i j} : w_1, w_2, \ldots \). Considering all possible sequences of such choices, we obtain the collection of preference relations \( \{P_{\varphi_i j}\}_{j \in J_{\varphi_i}} \), which constitute an Aizerman–Malishevski decomposition. Note that this decomposition (i.e., the indices of the linear preferences in the decomposition) depends on the order in which workers are selected.

Next, using the Aizerman–Malishevski decomposition, we introduce a \textbf{related one-to-one market}, denoted by $\boldsymbol{M}$, and provide a suitable notion of stability for this newly associated market. It is worth noting that the related one-to-one market is not unique: it depends on the order in which workers are considered when applying the Aizerman–Malishevski decomposition.

Let $${{\mathcal{F}}}=\{f_{11},\dots ,f_{1J_{1}},\dots ,f_{n1},\dots,f_{nJ_n}\}$$ be the set of firm-copies, where $f_{ij}$ denotes the firm $\varphi
_i $'s $j$-th copy. These copies are taken from the Aizermann-Malishevski
decomposition, so that $C_{\varphi _i}(S)=\bigcup _{j\in J_i}\max
_S P_{\varphi_ij}$.\footnote{Henceforth, to easy notation, we write $P_{ij}$ instead of $P_{\varphi_{i}j}.$} 
The set of workers is ${{\mathcal{W}}}=\{w_1,\dots ,w_k\}$. 
Now, each worker $w$ has a strict, transitive, and complete preference relation $\overline {P}_w$
on the set ${{\mathcal{F}}}\cup \{\emptyset \}$ such that:

\begin{enumerate}[(1)]

\item \label{pref.c.1}
$\varphi _i P_w \varphi _{i^{\prime }}$ implies that $f_{ij}\overline{P}_w f_{i^{\prime }j^{\prime }}$ for each $j\in J_i$ and each $j^{\prime }\in J_{i^{\prime }}$. \label{cond.i}  

\item \label{pref.c.2}
$f_{ij}\overline {P}_wf_{ij^{\prime }} \text{ if and only if }
j<j^{\prime }$, for each $i\in I=\{1,\dots ,n\}$ and each $j,j'\in J_i.$\label{cond.ii}
\end{enumerate}

Condition \eqref {cond.i} ensures that workers' preferences ($\overline {P}_w$)
over firm-copies reflect exactly the preference relation ($P_w$%
) over firms. In turn, condition \eqref {cond.ii} ensures that $\overline {%
P}_w$ is a linear order. We denote the associated one-to-one matching market
as $M=({{\mathcal{F}}},{{\mathcal{W}}},(\overline {P}_w)_{w\in W},(P_{ij})_{i\in
I,j\in J_i})$. 

The following example illustrates how we can decompose a many-to-one market where firms have path-independent choice functions into an associated one-to-one market using the Aizermann-Malishevski decomposition. It is well known that each path-independent choice function induces a substitutable preference relation \citep[see][among others]{martinez2008invariance,chamb2017choice}. Therefore, for simplicity in the example, we assume that firms have a substitutable preference relation satisfying also consistency.
\begin{example}\label{ejemplo}
Consider a many-to-one matching market $(\Phi,\mathcal{W}, P)$, where $\Phi=\{\varphi_1,\varphi_2\}$, $\mathcal{W}=\{w_1,w_2,w_3,w_4\}$ and the preference profile is presented in the following table:\footnote{To ease notation, we assume that firms' preference relations list only acceptable subsets, omitting curly brackets. For instance, we write $P_{\varphi}: w_1w_2, w_1, w_2$ instead of $P_{\varphi}: \{w_1, w_2\}, \{w_1\}, \{w_2\}, \emptyset$. Similarly, for workers' preferences, we list only the acceptable firms.
} 
\begin{align*}
P_{\varphi_{1}}: w_1w_2, w_1w_3, w_2w_4, w_3w_4, w_1, w_2, w_3, w_4 &  & P_{w_1}=P_{w_2}:\varphi_{2}, \varphi_{1}\\
P_{\varphi_{2}}: w_3w_4, w_1w_3, w_2w_4, w_1w_2, w_4, w_3, w_2, w_1  &  & P_{w_3}=P_{w_4}: \varphi_{1}, \varphi_{2}%
\end{align*}
The associated Aizermann-Malishevski decomposition of the path-independent choice functions into linear preferences is presented in the following table:
\begin{align*}
P_{11}: w_1,w_2,w_3,w_4  & & P_{21}: w_3,w_4,w_2,w_1\\
P_{12}: w_1,w_2,w_4,w_3  & & P_{22}: w_3,w_4,w_1,w_2\\
P_{13}: w_1,w_4,w_2,w_3  & & P_{23}: w_3,w_2,w_4,w_1\\
P_{14}: w_2,w_1,w_3,w_4  & & P_{24}: w_4,w_3,w_2,w_1\\
P_{15}: w_2,w_1,w_4,w_3  & & P_{25}: w_4,w_3,w_1,w_2\\
P_{16}: w_2,w_3,w_1,w_4  & & P_{26}: w_4,w_1,w_3,w_2
\end{align*}
Now, the preferences of workers over firm-copies are presented in the following table:
\begin{align*}
 \overline{P}_{w_1}=\overline{P}_{w_2}:f_{21},f_{22},f_{23},f_{24},f_{25},f_{26},f_{11},f_{12},f_{13},f_{14},f_{15},f_{16}\\
 \overline{P}_{w_3}=\overline{P}_{w_4}: f_{11},f_{12},f_{13},f_{14},f_{15},f_{16},f_{21},f_{22},f_{23},f_{24},f_{25},f_{26}%
\end{align*}
Thus, the one-to-one associated market is denoted by $(\mathcal{F},\mathcal{W},P_{\mathcal{F}},\overline{P}_{\mathcal{W}}),$ where $$\mathcal{F}=\{f_{11},\ldots,f_{16},f_{21},\ldots,f_{26}\}$$ and $P_\mathcal{F}$ is the preference profile derived of the Aizermann-Malishevski decomposition.  \hfill$\Diamond$
\end{example}

A matching in the associated one-to-one matching market is a
mapping $\lambda :{{\mathcal{F}}}\cup {{\mathcal{W}}}\rightarrow {{\mathcal{F%
}}}\cup {{\mathcal{W}}}\cup \{\emptyset\}$ such that $\lambda (f)\in {{\mathcal{W}}}\cup
\{\emptyset \}$, $\lambda (w)\in {{\mathcal{F}}}\cup \{\emptyset \}$ and $%
\lambda (f)=w\iff \lambda (w)=f$.

Given that one of our goals in this paper is to show an isomorphism between the set of stable matchings in the many-to-one market where firms have path-independent choice functions and a one-to-one market, it is necessary to adapt the notions of individual rationality and stability for the associated one-to-one market. The following example illustrates this need:\medskip

\noindent \textbf{Example \ref{ejemplo} (continued)} \textit{The stable matchings of the many-to-one market are:}
\begin{align*}
    \mu_\Phi=
\setcounter{MaxMatrixCols}{14}
\begin{pmatrix}
 \varphi_1 & \varphi_2  \\
 w_1w_2 &w_3w_4
 \end{pmatrix}
     \mu_1=
\setcounter{MaxMatrixCols}{14}
\begin{pmatrix}
 \varphi_1 & \varphi_2  \\
 w_1w_3 &w_2w_4
 \end{pmatrix}\\
     \mu_2=
\setcounter{MaxMatrixCols}{14}
\begin{pmatrix}
 \varphi_1 & \varphi_2  \\
 w_2w_4 &w_1w_3
 \end{pmatrix}
     \mu_\mathcal{W}=
\setcounter{MaxMatrixCols}{14}
\begin{pmatrix}
 \varphi_1 & \varphi_2  \\
 w_3w_4 &w_1w_2
 \end{pmatrix}
\end{align*}
\textit{Let $\lambda$ be a matching in the related one-to-one market. Consider the standard notion of individual rationality, i.e., $\lambda(f_{ij}) R_{ij} \emptyset$ and $\lambda(w) \overline{R}_w \emptyset$, and the standard notion of stability, i.e., there is no blocking pair $(f_{ij}, w)$ such that $w \neq \lambda(f_{ij})$, $w P_{ij} \lambda(f_{ij})$, and $f_{ij} \overline{P}_w \lambda(w)$. }

\textit{We can observe that the only one-to-one matching in the related market that satisfies this notion of individual rationality and that has no blocking pairs with respect to this blocking notion is the following:}
\[
\lambda=
\setcounter{MaxMatrixCols}{14}
\begin{pmatrix}
 f_{11} & f_{12} & f_{13} & f_{14} & f_{15} & f_{16} & f_{21} & f_{22} & f_{23} & f_{24} & f_{25} & f_{26} \\
 w_3 & w_4 & \emptyset & \emptyset & \emptyset & \emptyset & w_2 & w_1 & \emptyset & \emptyset & \emptyset & \emptyset 
 \end{pmatrix}.
\]

\textit{Note that if we consider any one-to-one matching that assigns workers $w_1$ and $w_2$ with any firm-copy $f_{1,\cdot}$, and workers $w_3$ and $w_4$ with any firm-copy $f_{2,\cdot}$, it will have blocking pairs: any unmatched firm-copy $f_{1\cdot}$ with workers $w_3$ or $w_4$, and similarly, any unmatched firm-copy $f_{2\cdot}$ with workers $w_1$ or $w_2$.} 
\textit{This situation clearly implies that, with these notions of individual rationality and stability, it is not possible to define an isomorphism between a set with four elements (the set of stable matchings in the many-to-one market) and a set with only one element (the set of stable matchings under these notions).}\hfill$\Diamond$\medskip

To conclude this section, we present the necessary definitions to establish a proper notion of ``individual rationality'' and ``stability'', in order to show in the following section that the sets of ``stable matchings'' in both markets are isomorphic.

\begin{definition} \label{def.block*}
A matching $\lambda $ is \textbf{blocked* by a worker }$w$ if $\emptyset $ $%
\overline {P}_{w}$ $\lambda \left (w\right )$. Moreover, a matching $\lambda $ is \textbf{blocked* by
a firm-copy} $f_{ij}$ if either

\begin{enumerate}[(1)]

\item $\emptyset P_{ij}\lambda (f_{ij})$, or

\item $\lambda (f_{ij})\neq \emptyset $ and  there is $j^{\prime }\in
J_i$ such that $\lambda (f_{ij^{\prime }})P_{ij}\lambda (f_{ij}).$ \label{cond.b.ii}
\end{enumerate}
\end{definition}

Note that the absence of blocking* firm-copies implies the negation of condition \eqref{cond.b.ii}. So, the negation of this condition can be interpreted as an envy-free-like property among the
matched copies of a given firm, i.e., no copy that is matched wants to
switch partners with another copy. Thus, a one-to-one matching that is not blocked* by any agent (either worker or firm-copy) we say that it is \textbf{individually rational*}.

\begin{definition}\label{def de par bloqueante}
A matching $\lambda $ is said to be blocked* by a firm-copy and worker pair $(f_{ij},w)$ if

\begin{enumerate}
\item $f_{ij}\overline {P}_w\lambda (w)$; and \label{cond.b.iii}

\item $wP_{ij}\lambda
(f_{ij^{\prime }})$ for each $j^{\prime }\in J_i.$ \label{cond.b.iv}
\end{enumerate}
\end{definition}
Now, we are in a position to formally define the notion of stability* for the associated one-to-one market.
\begin{definition}
A matching $\lambda $ is stable* if it is not blocked by any individual
agent or any (firm-copy)-worker pair.
\end{definition}
Let $\boldsymbol{S^*(M)}$ denote the \textbf{set of stable* matchings} of the associated
one-to-one matching market.

Note that it is important to emphasize that condition \eqref{cond.b.iv} in Definition \ref{def de par bloqueante} is stronger than the usual notion of pairwise block, as it must account for the possibility that another copy of the same firm may be matched to a better worker. This is particularly relevant when establishing the equivalence (in an isomorphic sense) between the set of stable matchings in the many-to-one matching market and the set of stable* matchings in an associated one-to-one market since this condition partially captures the essence of the fact that firms in the original many-to-one market are endowed with path-independent choice functions.

In the following remark, we summarize the possible scenarios in which a matching in the associated one-to-one market is not stable*.

\begin{remark}\label{remark de no stable*}
If  $\lambda\notin S^{*}\left(  M\right)$, then (at least) one of the following must hold:

\begin{enumerate}[(i)]

\item There is $ w\in \mathcal{W}$ such that $\emptyset\overline{P}_{w}\lambda\left(  w\right)$.  

\item There is $ f_{ij}\in \mathcal{F}$ such that $\emptyset P_{ij}\lambda\left(  f_{ij}\right)$.

\item There are $  f_{ij},f_{ij'}\in \mathcal{F}$ such that $\lambda(f_{ij^{\prime}})P_{ij}\lambda(f_{ij})P_{ij}\emptyset$.

\item There is a pair $ (f_{ij},w)\in \mathcal{F}\times \mathcal{W}$ such that $f_{ij}\overline{P}_{w}\lambda(w)$ and $wP_{ij}\lambda(f_{ij'})$ for each $j'$.

\end{enumerate}

\end{remark}

\section{Results on the set of stable* matchings}\label{seccion resultados}
In this section, we present the main results of the paper. In Subsection \ref{subseccion algorithmos}, we adapt the well-known deferred acceptance algorithm, initially introduced in the seminal work of \cite{Gale1962}, to our one-to-one market framework and the notion of stable*. We show that in both cases, whether firms-copies propose or workers propose, the output of the adapted deferred acceptance algorithm is a stable* matching. Thus, we demonstrate that the set of stable* matchings is non-empty.

In Subsection \ref{subseccion isomorfismo}, we establish an isomorphism between the set of stable matchings in a many-to-one market where the firms have path-independent choice functions and the set of stable* matchings in an associated one-to-one market. As a result of our isomorphism, we can establish an adapted version of the so-called Rural Hospital Theorem for an associated one-to-one market.

\subsection{Adapted deferred acceptance algorithm}\label{subseccion algorithmos}

In this subsection, we introduce two variants of the deferred acceptance algorithm that compute a Stable* matching: one in which firm-copies propose and the other in which workers propose. These adaptations take into account the specific characteristics of an associated one-to-one market.

The algorithm presented in Table \ref{tabla algoritmo F-proposal} is the adapted firm-copies-proposing deferred acceptance algorithm for a one-to-one market $M$. While this algorithm follows the same principle as the one introduced by \cite{Gale1962}, our adaptation includes an important feature: a firm-copy can only make an offer to a worker if no other copy of the same firm is already matched with a more preferred worker. This restriction is essential to prevent envy between firm-copies.
Let $O^k_w$ be the set of firms making a job offer to worker $w$ at stage $k$.
\begin{table}[h!]
\centering
\begin{tabular}{l l}
\hline \hline
\multicolumn{2}{l}{\textbf{Algorithm:}}\vspace*{10 pt}\\

\textbf{Input} & A one-to-one market $M$.\\

\textbf{Output} & A matching $\lambda_{\mathcal{F}}$.\vspace*{10 pt}\\
 & \hspace{-52pt}\texttt{DEFINE}: $\lambda^0(w)=\emptyset$ and $\lambda^0(f)=\emptyset$ for each $w\in \mathcal{W}$ and each $f\in \mathcal{F}$.\vspace*{10 pt}\\
  \textbf{Stage $\boldsymbol{1}$} &  $\boldsymbol{(a)}$ Each firm-copy $f$ makes a job offer to the best acceptable worker at $P_f$.\\
&  $\boldsymbol{(b)}$ Each worker $w$ selects, with respect to $\overline{P}_w$, the best firm-copy\\
& \hspace{20 pt} among those in $O^1_w\cup \lambda^0(w)$, say $\widetilde{f}$.\\
& \hspace{-20 pt} Set $R^1_w=O^1_w \setminus \{\widetilde{f}\}$  for each $w\in \mathcal{W}$. Define \\
& \hspace{40 pt}$\lambda^1(f_{ij})=\begin{cases}
\emptyset   & \text{if } f_{ij}\in \cup_{w\in \mathcal{W}} R_w^1, \\ 
w   & \text{if} f_{ij}=\widetilde{f}.\\
\end{cases}$ \\
 \textbf{Stage $\boldsymbol{k}$} & $\boldsymbol{(a)}$ Each firm-copy $f_{ij}\in \bigcup_{\widetilde{w}\in \mathcal{W}}R^{k-1}_{\widetilde{w}}$ is \emph{authorized} to make an offer to the best \\ 
 &\hspace{20 pt} acceptable worker $w$ with respect to $P_{f_{ij}}$ who has not previously rejected it, \\
 & \hspace{20 pt} if there are no $f_{ij'}$ and $w'\neq w$ such that $\lambda^{k-1}(w')=f_{ij'}$ and $w'P_{f_{ij}}w$.\\
& \hspace{20 pt} Otherwise, $f_{ij}$ is \emph{unauthorized} to make an offer at this stage.\\
& $\boldsymbol{(b)}$ Each worker $w$ selects, with respect to $\overline{P}_w$, the best firm-copy\\
& \hspace{20 pt} among those in $O^k_w\cup \lambda^{k-1}(w)$, say $\widetilde{f}$. \\
& \hspace{-20 pt} Set $R^k_w=O^k_w \cup \lambda^{k-1}(w) \setminus \{\widetilde{f}\}$ for each $w\in \mathcal{W}$. Define\\
& \hspace{40 pt}$\lambda^k(f_{ij})=\begin{cases}
\emptyset    & \text{if } f_{ij}\in \cup_{w\in \mathcal{W}} R_w^k, \\ 
\emptyset   & \text{if $f_{ij}$ is unauthorized to make an offer,}\\ 
w   & \text{if} f_{ij}=\widetilde{f}\text{, and}\\
\lambda^{k-1}(f_{ij})& \text{otherwise.}
\end{cases}$ \\
& \hspace{20 pt} \texttt{If} $\bigcup_{w\in \mathcal{W}}R^k_w =\emptyset$: \\
& \hspace{60 pt} \texttt{STOP} and let $\lambda_{\mathcal{F}}=\lambda^k$.\\
& \hspace{20 pt} \texttt{ELSE:} \\
& \hspace{60 pt} \texttt{CONTINUE TO STAGE $k+1$} \\
\\
\hline \hline
\end{tabular}
\caption{Adapted Firm-copies-proposal Deferred Acceptance algorithm}
\label{tabla algoritmo F-proposal}
\end{table}

\bigskip

The algorithm will eventually stop in a finite number of stages, as firms begin by making offers to the top of their preference lists and continue to make offers (if authorized) to less preferred workers as their previous offers are rejected. Given that the set of firms and workers is finite, there will eventually be no more rejections, leading to the cessation of the algorithm. 
Before presenting that the output of the algorithm is a stable* matching, we illustrate the procedure for Example \ref{ejemplo}.
\medskip

\noindent \textbf{Example \ref{ejemplo} (continued)} Considering the associated one-to-one market previously presented, we applied the adapted firm-copies-proposing deferred acceptance algorithm as follows:

\noindent\textit{\noindent\textbf{Stage 1:}  Firm-copies offers are presented in the following table:}
\begin{center}
\begin{tabular}{|c|c|c|c|c|c|c|c|c|c|c|c|}\hline
$f_{11}$&$f_{12}$&$f_{13}$&$f_{14}$&$f_{15}$&$f_{16}$&$f_{21}$&$f_{22}$&$f_{23}$&$f_{24}$&$f_{25}$&$f_{26}$\\
$w_1$&$w_1$&$w_1$&$w_2$&$w_2$&$w_2$&$w_3$&$w_3$&$w_3$&$w_4$&$w_4$&$w_4$\\ \hline
\end{tabular}
\end{center} 
So, the sets of offers that workers receive in Stage 1 are: 
\begin{align*}
    O_{w_1}^{1}=\{f_{11},f_{12},f_{13}\}\\
    O_{w_2}^{1}=\{f_{14},f_{15},f_{16}\}\\
    O_{w_3}^{1}=\{f_{21},f_{22},f_{23}\}\\
    O_{w_4}^{1}=\{f_{24},f_{25},f_{26}\}
\end{align*}
Workers, according to their preferences $\overline{P}$, accept an offer and the resulting rejection sets are: 
\begin{align*}
    R_{w_1}^{1}=\{f_{12},f_{13}\}\\R_{w_2}^{1}=\{f_{15},f_{16}\}\\R_{w_3}^{1}=\{f_{22},f_{23}\}\\R_{w_4}^{1}=\{f_{25},f_{26}\}
\end{align*}
Thus, $$
\lambda^1=
\setcounter{MaxMatrixCols}{14}
\begin{pmatrix}
 f_{11} & f_{12} & f_{13} & f_{14} & f_{15} & f_{16} & f_{21} & f_{22} & f_{23} & f_{24} &f_{25} & f_{26} \\
 w_1 &\emptyset & \emptyset & w_2 & \emptyset & \emptyset & w_3 & \emptyset & \emptyset & w_4 & \emptyset & \emptyset 
 \end{pmatrix}.
$$

\noindent\textit{\textbf{Stage 2 (and henceforth):} Each firm-copy $f_{ij} \in \bigcup_{w \in \mathcal{W}} R^{1}_{w}$ is no longer authorized to make an offer to any $w \in \mathcal{W}$. To see this consider, for instance, $f_{12} \in R^{1}_{w_1}$. We have that $w_1 = \lambda(f_{11}) P_{12} w_k$, where $k \in \{2,4\}$. Therefore, the output of the adapted firm-copies-proposing deferred acceptance algorithm is:
$$
\lambda_\mathcal{F}=
\setcounter{MaxMatrixCols}{14}
\begin{pmatrix}
 f_{11} & f_{12} & f_{13} & f_{14} & f_{15} & f_{16} & f_{21} & f_{22} & f_{23} & f_{24} &f_{25} & f_{26} \\
 w_1 &\emptyset & \emptyset & w_2 & \emptyset & \emptyset & w_3 & \emptyset & \emptyset & w_4 & \emptyset & \emptyset 
 \end{pmatrix}.
$$}\hfill $\Diamond$

The following theorem proves the output of the adapted firm-copies-proposing deferred acceptance algorithm is a stable* matching.

\begin{theorem}\label{resultado del DA es estable}
  Let  $\lambda_\mathcal{F}$ be the output of the firm-copies-proposing deferred acceptance algorithm. Then, $\lambda_\mathcal{F}$ is a stable* matching.
\end{theorem}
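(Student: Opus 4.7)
The plan is to verify that $\lambda_\mathcal{F}$ satisfies the negation of all four failure conditions listed in Remark~\ref{remark de no stable*}. Two monotonicity observations underpin the argument: (a) each firm-copy's tentative match can only weaken in its own ranking $P_{ij}$ as the algorithm progresses (or disappear entirely once the copy is declared unauthorized), since copies propose in the order of $P_{ij}$ and drop out permanently after unauthorization; and (b) each worker's tentative match can only strengthen in $\overline{P}_w$, because workers always retain the best offer seen so far. Termination in finitely many stages follows from (a), and the resulting assignment is trivially a matching. Conditions (i)--(ii) of Remark~\ref{remark de no stable*}, i.e.\ individual rationality* for workers and firm-copies, follow immediately, since the algorithm never produces a pairing with $\emptyset$ preferred on either side.

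The substantive work lies in conditions (iii) and (iv). For (iii), I would proceed by contradiction: suppose $f_{ij}$ is matched to $w\neq\emptyset$ and $f_{ij'}$ (same firm $\varphi_i$) is matched to $w'$ with $w'P_{ij}w$, and let $k^{\ast}$ and $\hat{k}$ be the stages at which $f_{ij}$ and $f_{ij'}$ made their final accepted proposals. If $\hat{k}<k^{\ast}$, then at the end of stage $k^{\ast}-1$ the copy $f_{ij'}$ is already matched to $w'$, directly violating $f_{ij}$'s authorization clause for its proposal to $w$ at stage $k^{\ast}$. If $\hat{k}\geq k^{\ast}$, then since $w'P_{ij}w$ the copy $f_{ij}$ must have been rejected by $w'$ at some earlier stage $k^{\ast\ast}<k^{\ast}$ in favor of some $f^{\ast\ast}$ with $f^{\ast\ast}\overline{P}_{w'}f_{ij}$. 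Using (b) together with the authorization clause, one shows that $w'$'s held copy at the end of stage $\hat{k}-1$ must be either a copy of $\varphi_i$---which violates $f_{ij}$'s authorization at $k^{\ast}$---or a copy of some firm $\varphi_{i''}$ with $\varphi_{i''}P_{w'}\varphi_i$, in which case condition~\eqref{cond.i} forces this held copy to strictly $\overline{P}_{w'}$-dominate $f_{ij'}$, contradicting $f_{ij'}$'s acceptance at stage $\hat{k}$.

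For (iv), suppose $(f_{ij},w)$ is a blocking* pair. Taking $j'=j$ in the second blocking clause gives that $w$ is acceptable to $f_{ij}$ and $wP_{ij}\lambda_\mathcal{F}(f_{ij})$. If $f_{ij}$ ever proposed to $w$, observation (b) yields $\lambda_\mathcal{F}(w)\overline{R}_wf_{ij}$, contradicting $f_{ij}\overline{P}_w\lambda_\mathcal{F}(w)$. Otherwise $f_{ij}$ must have been declared unauthorized at some stage with a current target $w^{\circ}$ satisfying $w^{\circ}R_{ij}w$; at that stage some copy $f_{ij^{\ast}}$ of $\varphi_i$ is matched to a worker $w^{\ast\ast}P_{ij}w$. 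Combining observation (b) with the no-envy conclusion from (iii), one argues that $w^{\ast\ast}$ remains matched at the end to some copy $f_{i\tilde{j}}$ of $\varphi_i$ (either $f_{ij^{\ast}}$ itself or a lower-indexed copy, since any displacement by a strictly $\overline{P}_{w^{\ast\ast}}$-preferred firm would cascade to force $\lambda_\mathcal{F}(w)$ into a firm strictly preferred to $\varphi_i$ by $w$, again contradicting $f_{ij}\overline{P}_w\lambda_\mathcal{F}(w)$). Thus $\lambda_\mathcal{F}(f_{i\tilde{j}})=w^{\ast\ast}P_{ij}w$, violating the blocking clause for $j'=\tilde{j}$.

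The main obstacle is the branch $\hat{k}\geq k^{\ast}$ of (iii): the authorization clause is expressed in the proposer's own ranking $P_{ij'}$, whereas envy is measured in $P_{ij}$, so the bridge between the two rankings has to be built through the worker $w'$, exploiting $\overline{P}_{w'}$-monotonicity and condition~\eqref{cond.i} rather than any direct comparison between $P_{ij}$ and $P_{ij'}$.
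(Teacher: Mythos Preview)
Your argument follows the paper's four–case breakdown via Remark~\ref{remark de no stable*}, and for (i)--(iii) it is essentially the same proof: the timing split $\hat k\lessgtr k^\ast$ in (iii), the use of the authorization clause for $\hat k<k^\ast$, and the sandwich contradiction with condition~\eqref{cond.i} for $\hat k\geq k^\ast$ all match the paper. (Your subcase~(A) is fine: if $w'$ holds a $\varphi_i$-copy at stage $\hat k-1$ then by monotonicity and condition~\eqref{cond.i} it must already hold a $\varphi_i$-copy at stage $k^\ast-1$, which is what violates $f_{ij}$'s authorization.)

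The one genuine gap is your ``cascade'' step in the unauthorized branch of (iv). You assert that if $w^{\ast\ast}$ is displaced from $f_{ij^\ast}$ by a copy of a different firm $\varphi_{i''}$, this propagates to force $\lambda_{\mathcal F}(w)$ into a firm strictly $P_w$-preferred to $\varphi_i$. But nothing in the algorithm links $w^{\ast\ast}$ to $w$: $w^{\ast\ast}$ moving up in $\overline P_{w^{\ast\ast}}$ tells you nothing about $w$'s eventual match, and $f_{ij^\ast}$'s subsequent proposals need not reach $w$ at all. The paper handles (iv) differently and more tersely: it argues that the relevant copy of $\varphi_i$ must have proposed to $w$ and been rejected in favour of a copy of a \emph{different} firm (else the proposer would have been unauthorized at the next stage and hence frozen at $\emptyset$, contradicting that it later reached $\lambda_{\mathcal F}(f_{ij})$), whence $\lambda_{\mathcal F}(w)\,\overline P_w\,f_{ij}$ by monotonicity and condition~\eqref{cond.i}. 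That line sidesteps any need to track $w^{\ast\ast}$ after the unauthorization stage; your version would need an explicit invariant (for instance, that from the unauthorization stage onward some copy of $\varphi_i$ always holds a worker $P_{ij}$-above $w$) to close the gap.
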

\begin{proof}
Assume that $\lambda_\mathcal{F}\notin S^*(M).$ Then, by Remark \ref{remark de no stable*} there are four possible situations:
\begin{enumerate}[(i)]
\item \textbf{There is $\boldsymbol{w\in \mathcal{W}$ such that $\emptyset \overline{P}_w\lambda_\mathcal{F}(w)}$}. Given that at each stage $k$,  $w$ selects $\widetilde{f}$ as the best firm-copy among those in $O^k_w\cup \lambda^{k-1}(w) $ with respect to $\overline{P}_w$, and in stage 1 $\lambda^{0}(w)=\emptyset $, we have that $\lambda_\mathcal{F}(w)\overline{R}_w \emptyset $. Then, this situation does not occur.

\item \textbf{There is $\boldsymbol{f_{ij}\in \mathcal{F}$ such that $\emptyset P_{ij}\lambda_\mathcal{F}(f_{ij})}$}. Given that at each stage $f_{ij}$ makes an offer, if it is authorized, to an acceptable worker with respect to $P_{ij}$, we have that $\lambda_\mathcal{F}(f_{ij})R_{ij}\emptyset.$ Then, this situation does not occur.

\item \textbf{There are $\boldsymbol{f_{ij},f_{ij'}\in \mathcal{F}$ such that $\lambda_\mathcal{F}(f_{ij'}) P_{ij}\lambda_\mathcal{F}(f_{ij})P_{ij}\emptyset}$}.  Assume that there are $w,w' \in \mathcal{W}$ such that $w=\lambda_\mathcal{F}(f_{ij})$ and $w'=\lambda_\mathcal{F}(f_{ij'}).$
Assume also that there are two stages $t,t'$ such that $t'$ is the stage where $f_{ij'} $ is assigned to $w'$ ($\lambda^{t'}(f_{ij'})=w'$ and $\lambda^{t'-1}(f_{ij'})\neq w'$) and such that $t$ is the stage where $f_{ij} $ is assigned to $w$ ($\lambda^{t}(f_{ij})=w$ and $\lambda^{t-1}(f_{ij})\neq w$). We have to consider two subcases.

\textbf{$\boldsymbol{t'<t}:$} Since $w'P_{ij} w$, we have that $f_{ij}$ is unauthorized to make an offer to $w$, so it is not possible that $\lambda_\mathcal{F}(f_{ij})=w.$  

\textbf{$\boldsymbol{t'\geq t}$:} Note that since $w$ is matched with $f_{ij}$ before than $w'$ and the fact that $w'P_{ij} w$ implies that $w'$ reject $f_{ij} $ in an earlier step that $t$, say $\widetilde{t}$. In this case, we claim that there is a firm-copy $f_{\widetilde{i}k}$  that is not a copy of $f_{ij}$ such that $w'=\lambda^{\widetilde{t}}(f_{\widetilde{i}k})$ and $f_{\widetilde{i}k} \overline{P}_{w'}f_{i,j}$. Since $w'$ reject $f_{ij} $ at stage $\widetilde{t}$, such a firm-copy exists that is temporarily matched to $w'$ at stage $\widetilde{t}$, and $w'$ prefers this firm-copy over $f_{ij}.$ Moreover,  since  $\widetilde{t}\leq t$, we have that $f_{\widetilde{i}k}$ is not a copy of $f_{ij}$, otherwise $f_{ij}$ will not be authorized to make an offer to $w$. Thus, in this case, the claim holds. Note that this contradicts the definition of workers' preferences $\overline{P}$ item (1), since $f_{ij'}\overline{P}_{w'} f_{\widetilde{i}k} \overline{P}_{w'}f_{i,j}.$ 

Then, by the two cases analyzed, this situation does not occur.

\item \textbf{There is a pair $\boldsymbol{ (f_{ij},w)\in \mathcal{F}\times \mathcal{W}}$ such that $\boldsymbol{f_{ij}\overline{P}_{w}\lambda_\mathcal{F}(w)}$ and $\boldsymbol{wP_{ij}\lambda_\mathcal{F}(f_{ij'})}$ for each $\boldsymbol{j'\in J_{i}}$}.
If $w P_{ij}\lambda_\mathcal{F}(f_{ij'}) $ for each $j'\in J_{i}$, then each $f_{ij'}$ has made an offer to $w$ in a previous stage and it was rejected because $w$ was temporarily matched with a better firm-copy that is not a copy of $f_{ij'}$, otherwise  $f_{ij'}$ is unauthorized to make an offer. Then, $\lambda_\mathcal{F}(w)\overline{P}_w f_{ij'}$ for each $j'\in J_i.$ Then, this situation does not occur.
 \end{enumerate}
Therefore, since none of the four cases hold, we have that $\lambda_\mathcal{F}\in S^*(M).$
\end{proof}

The algorithm presented in Table \ref{tabla algoritmo W-proposal} is the adapted worker-proposing deferred acceptance algorithm for a market $M$. This adaptation also ensures that the output is envy-free among firm-copies. We say that a worker makes a \emph{valid} offer to a firm-copy if the acceptance of the worker's proposal by that firm-copy does not generate envy among the other copies.

Let $O^k_f$ be the set of workers making a job offer to firm $f$ at stage $k$. 
\begin{table}[h!]
\centering
\begin{tabular}{l l}
\hline \hline
\multicolumn{2}{l}{\textbf{Algorithm:}}\vspace*{10 pt}\\

\textbf{Input} & A one-to-one market $M$.\\

\textbf{Output} & A matching $\lambda_{\mathcal{W}}$.\vspace*{10 pt}\\
 & \hspace{-52pt}\texttt{DEFINE}: $\lambda^0(w)=\emptyset$ and $\lambda^0(f_{ij})=\emptyset$ for each $w\in \mathcal{W}$ and each $f_{ij}\in \mathcal{F}$.\vspace*{10 pt}\\
  \textbf{Stage $\boldsymbol{1}$} &  $\boldsymbol{(a)}$ Each worker $w$ makes an offer to the best acceptable firm-copy  at $\overline{P}_w$.\\
&  $\boldsymbol{(b)}$ Each firm-copy $f_{ij}$ selects, with respect to $P_{f_{ij}}$, the best worker\\
& \hspace{20 pt} among those in $O^1_{f_{ij}}\cup \lambda^0(f_{ij})$, say $\widetilde{w}_{f_{ij}}$.\\
& \hspace{-20 pt} Set  $R^1_{f_{ij}}=O^1_{f_{ij}} \setminus \{\widetilde{w}_{f_{ij}}\}$  for each $f_{ij}\in \mathcal{F}$. Define \\
& \hspace{40 pt}$\lambda^1(w)=\begin{cases}
\emptyset    & \text{if } w\in \cup_{f_{ij}\in \mathcal{F}} R_{f_{ij}}^1, \\ 
f_{ij} &   \text{if } w=\widetilde{w}_{f_{ij}}.\\
\end{cases}$\\
 \textbf{Stage $\boldsymbol{k}$} & $\boldsymbol{(a)}$ Each worker $w\in \bigcup_{f_{ij}\in \mathcal{F}}R^{k-1}_{f_{ij}}$ make an new offer to the best \\ 
 &\hspace{20 pt} acceptable firm-copy $f_{ij}$ with respect to $\overline{P}_w$ who has not previously rejected it.\\
& $\boldsymbol{(b)}$ For each firm-copy $f_{ij}$, define the set of ``Valid Offers'' \\
& $\widetilde{O}_{ij}^k=O_{ij}^k\setminus\{w\in O_{ij}^k: \exists w'\in \mathcal{W} \text{ with } w'=\lambda^{k-1}(f_{ij'})P_{f_{ij}} w \text{ for some }j'\in J_i \}$.\\
& Each firm $f_{ij}$ selects the best worker, with respect to $P_{f_{ij}}$\\
& \hspace{20 pt} among those in $\widetilde{O}^k_{f_{ij}}\cup \lambda^{k-1}(f_{ij})$, say $\widetilde{w}^{ij}$. \\
& \hspace{-20 pt} Set $R^k_{f_{ij}}=O^k_{f_{ij}} \cup \lambda^{k-1}(f_{ij}) \setminus \{\widetilde{w}^{ij}\}$ for each $f_{ij}\in \mathcal{F}$. Define\\
& \hspace{40 pt}$\lambda^k(w)=\begin{cases}
\emptyset    & \text{if } w\in \cup_{f_{ij}\in \mathcal{F}} R_{f_{ij}}^k, \\ 
f_{ij}   & \text{if } w=\widetilde{w}^{ij}\text{, and}\\
\lambda^{k-1}(w)& \text{otherwise.}
\end{cases}$\\
& \hspace{20 pt} \texttt{If} $\bigcup_{f_{ij}\in \mathcal{F}}R^k_{f_{ij}} =\emptyset$: \\
& \hspace{60 pt} \texttt{STOP} and let $\lambda_\mathcal{W}=\lambda^k$.\\
& \hspace{20 pt} \texttt{ELSE:} \\
& \hspace{60 pt} \texttt{CONTINUE TO STAGE $k+1$} \\
\\
\hline \hline
\end{tabular}
\caption{Adapted Worker-proposal Deferred Acceptance algorithm}
\label{tabla algoritmo W-proposal}
\end{table}

\bigskip
Before proving that the output of the algorithm is a stable* matching, we first illustrate the procedure with Example \ref{ejemplo}.
\medskip

\noindent \textbf{Example \ref{ejemplo} (continued)} Considering the associated one-to-one market previously presented, we applied the adapted worker-proposing deferred acceptance algorithm as follows:

\noindent\textit{\noindent\textbf{Stage 1:} The workers' offers are presented in the following table:}
\begin{center}
\begin{tabular}{|c|c|c|c|}\hline
$w_1$&$w_2$&$w_3$&$w_4$\\
$f_{21}$&$f_{21}$&$f_{11}$&$f_{11}$\\ \hline
\end{tabular}
\end{center} 
\textit{So, the sets of offers that firm-copies receive in stage 1 are: $O_{f_{1,1}}^{1}=\{w_3,w_4\}$ and $O_{f_{21}}^{1}=\{w_1,w_2\}.$ }
\textit{Firm-copies, according to their preferences $P$, accept the offers, and the resulting rejection sets are $R_{f_{11}}^{1} = {w_4}$ and $R_{f_{21}}^{1} = {w_2}$. Thus,} $$
\lambda^1=
\setcounter{MaxMatrixCols}{14}
\begin{pmatrix}
  f_{11} & f_{12} & f_{13} & f_{14} & f_{15} & f_{16} & f_{21} & f_{22} & f_{23} & f_{24} &f_{25} & f_{26} \\
 w_3 &\emptyset & \emptyset & \emptyset & \emptyset & \emptyset & w_2 & \emptyset & \emptyset & \emptyset & \emptyset & \emptyset 
 \end{pmatrix}.
$$

\noindent\textit{\textbf{Stage 2:} Each worker ${w_2, w_4} \in R_{f_{11}}^{1} \cup R_{f_{21}}^{1}$ submits a new offer. Worker $w_2$ offers to firm-copy $f_{22}$, while $w_4$ offers to firm-copy $f_{1,2}$. Since both offers are valid, and firm-copies $f_{12}$ and $f_{22}$ accept them without generating any further rejections, the algorithm stops. Thus, $$
\lambda_\mathcal{W}=
\setcounter{MaxMatrixCols}{14}
\begin{pmatrix}
  f_{11} & f_{12} & f_{13} & f_{14} & f_{15} & f_{16} & f_{21} & f_{22} & f_{23} & f_{24} &f_{25} & f_{26} \\
 w_3 &w_4 & \emptyset & \emptyset & \emptyset & \emptyset & w_2 & w_1 & \emptyset &\emptyset & \emptyset & \emptyset 
 \end{pmatrix}.
$$}\hfill $\Diamond$

The following theorem shows that the output of the adapted worker-proposing deferred acceptance algorithm is a stable* matching. The proof follows the same spirit as the proof of Theorem \ref{resultado del DA es estable}, adapted for the case where workers are the ones making the offers.

\begin{theorem}
   Let  $\lambda_\mathcal{W}$ be the output of the worker-proposing deferred acceptance algorithm. Then, $\lambda_\mathcal{W}$ is a stable* matching.
   \end{theorem}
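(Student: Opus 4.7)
The plan is to mirror the four-case contradiction argument of Theorem~\ref{resultado del DA es estable}, adapted to the worker-proposing setting. I will assume $\lambda_\mathcal{W} \notin S^*(M)$ and derive a contradiction in each of the four cases from Remark~\ref{remark de no stable*}. Cases (i) and (ii) are immediate from the algorithm's design: each worker $w$ only proposes to $\overline{P}_w$-acceptable firm-copies, and each firm-copy $f_{ij}$ only ever keeps $P_{ij}$-acceptable workers (since the selection from $\widetilde{O}^k_{f_{ij}} \cup \lambda^{k-1}(f_{ij})$ is always $P_{ij}$-best and $\lambda^0(f_{ij})=\emptyset$). Hence $\lambda_\mathcal{W}(w)\,\overline{R}_w\,\emptyset$ and $\lambda_\mathcal{W}(f_{ij})\,R_{ij}\,\emptyset$ always hold.

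For case (iii), suppose $\lambda_\mathcal{W}(f_{ij'}) P_{ij} \lambda_\mathcal{W}(f_{ij}) P_{ij} \emptyset$ and write $w=\lambda_\mathcal{W}(f_{ij})$ and $w'=\lambda_\mathcal{W}(f_{ij'})$. Let $t,t'$ be the first stages at which $\lambda^t(f_{ij})=w$ and $\lambda^{t'}(f_{ij'})=w'$; the core fact I use is that in the worker-proposing DA the match of a firm-copy is weakly improving under its own preference, so once formed these matches persist. If $t'<t$ then $\lambda^{t-1}(f_{ij'})=w'$ with $w' P_{ij} w$, and the validity filter excludes $w$ from $\widetilde{O}^t_{ij}$, contradicting the acceptance of $w$ by $f_{ij}$ at stage $t$. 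If $t'\geq t$, I apply condition~(2) of workers' preferences: when $j<j'$, $w'$ must have proposed to $f_{ij}$ at some earlier stage $s$ and been rejected. Then either $w'\in \widetilde{O}^s_{ij}$ but was passed over, in which case monotonicity gives $w=\lambda_\mathcal{W}(f_{ij}) R_{ij} \widetilde{w}^{ij} P_{ij} w'$, contradicting $w' P_{ij} w$; or $w'\notin \widetilde{O}^s_{ij}$ because $\lambda^{s-1}(f_{ij^\dagger}) P_{ij} w'$ for some $j^\dagger$, and a strong-induction argument on the stage number (combined with the validity inequality $\lambda^t(f_{ij}) R_{ij} \lambda^{t-1}(f_{ij^\dagger})$ obtained at $f_{ij}$'s acceptance stage) yields $\lambda_\mathcal{W}(f_{ij}) P_{ij} w'$, again a contradiction. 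The case $j>j'$ is handled symmetrically by tracking $w$'s own proposal sequence through copies of $\varphi_i$---by condition~(2), $w$ must have been rejected earlier by $f_{ij'}$---and running the analogous analysis.

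For case (iv), suppose $(f_{ij},w)$ blocks* $\lambda_\mathcal{W}$. Since $f_{ij}\,\overline{P}_w\,\lambda_\mathcal{W}(w)$, worker $w$ proposed to $f_{ij}$ at some stage $s$ and was rejected. If $w\in\widetilde{O}^s_{ij}$ but another worker was selected, monotonicity yields $\lambda_\mathcal{W}(f_{ij}) R_{ij} w$, contradicting the blocking* requirement $w P_{ij} \lambda_\mathcal{W}(f_{ij''})$ specialized to $j''=j$. If $w\notin \widetilde{O}^s_{ij}$, then $\lambda^{s-1}(f_{ij^*}) P_{ij} w$ for some $j^*$, and combining the envy-freeness established in case~(iii) with the monotonicity of matches gives $\lambda_\mathcal{W}(f_{ij^*}) R_{ij} w$, contradicting $w P_{ij} \lambda_\mathcal{W}(f_{ij^*})$. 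The main obstacle is the subcase $j>j'$ of case~(iii), where $w'$ may not directly propose to $f_{ij}$; ruling out envy then requires carefully leveraging the validity check applied when $f_{ij'}$ accepted $w'$, together with monotonicity and the block structure imposed by conditions (1)--(2) of workers' preferences.
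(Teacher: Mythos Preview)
Your proposal follows the same four-case contradiction strategy as the paper and handles each case in essentially the same way (validity filter for (iii) with $t'<t$; proposal-tracing plus monotonicity of firm-copy matches for (iii) with $t'\geq t$; valid/non-valid offer split for (iv)). In case (iii) you are in fact more explicit than the paper---distinguishing $j<j'$ from $j>j'$ via condition~(2) on workers' preferences and flagging the needed inductive invariant---whereas the paper's $t'\geq t$ argument is terser and phrased almost identically to the firm-proposing case.
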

\begin{proof}
Assume that $\lambda_\mathcal{W}\notin S^*(M).$ Then, by Remark \ref{remark de no stable*} there are four possible situations:
\begin{enumerate}[(i)]
\item \textbf{There is $\boldsymbol{f_{ij}\in \mathcal{F}$ such that $\emptyset P_{ij}\lambda_\mathcal{W}(f_{ij})}$}. Given that at each stage $k$,  $f_{ij}$ selects $\widetilde{w}^{ij}$ as the best worker among those in $\widetilde{O}^k_{f_{ij}}\cup \lambda^{k-1}(w) $ with respect to $P_{ij}$, and in stage 1 $\lambda^{0}(f_{ij})=\emptyset $, we have that $\lambda_ \mathcal{W}(f_{ij})R_{ij} \emptyset $. Then, this situation does not occur.

\item \textbf{There is $\boldsymbol{w\in \mathcal{W}$ such that $\emptyset \overline{P}_{w}\lambda(w)}$}. Given that at each stage $w$ makes an offer, to an acceptable firm with respect to $\overline{P}_{w}$, we have that $\lambda(w)_\mathcal{W}\overline{R}_{w}\emptyset.$ Then, this situation does not occur.

\item \textbf{There are $\boldsymbol{f_{ij},f_{ij'}\in \mathcal{F}$ such that $\lambda_\mathcal{W}(f_{ij'}) P_{ij}\lambda_\mathcal{W}(f_{ij})P_{ij}\emptyset}$}.  Assume that there are $w,w' \in \mathcal{W}$ such that $w=\lambda_\mathcal{W}(f_{ij})$ and $w'=\lambda_\mathcal{W}(f_{ij'}).$
Assume also that there are two stages $t,t'$ such that $t'$ is the stage where $f_{ij'} $ is assigned to $w'$ ($\lambda^{t'}(f_{ij'})=w'$ and $\lambda^{t'-1}(f_{ij'})\neq w'$) and such that $t$ is the stage where $f_{ij} $ is assigned to $w$ ($\lambda^{t}(f_{ij})=w$ and $\lambda^{t-1}(f_{ij})\neq w$). There are two subcases to consider.

\textbf{$\boldsymbol{t'<t}:$} Since $w'P_{ij} w$, we have that $w$ is not a valid offer for $f_{ij}$, so it is not possible that $\lambda_\mathcal{W}(f_{ij})=w.$  

\textbf{$\boldsymbol{t'\geq t}$:} Note that since $w$ is matched with $f_{ij}$ before that $w'$ and the fact that $w'P_{ij} w$ implies that $w'$ reject $f_{ij} $ in an earlier step that $t$, say $\widetilde{t}$. In this case, we claim that there is a firm $f_{\widetilde{i}k}$  that is not a copy of $f_{ij}$ such that $w'=\lambda^{\widetilde{t}}(f_{\widetilde{i}k})$ and $f_{\widetilde{i}k} \overline{P}_{w'}f_{i,j}$. Since $w'$ reject $f_{ij} $ at stage $\widetilde{t}$, such a firm-copy exists and is temporarily matched to $w'$ at stage $\widetilde{t}$, and $w'$ prefers this firm-copy over $f_{ij}.$ Moreover,  since  $\widetilde{t}\leq t$, we have that $f_{\widetilde{i}k}$ is not a copy of $f_{ij}$, otherwise $w$ is not a valid offer for $f_{ij}$. Thus, in this case, the claim holds. Note that this contradicts the definition of workers' preferences $\overline{P}$ item (1), since $f_{ij'}\overline{P}_{w'} f_{\widetilde{i}k} \overline{P}_{w'}f_{i,j}.$ 

Then, by the two cases analyzed, this situation does not occur.

\item \textbf{There is a pair $\boldsymbol{ (f_{ij},w)\in \mathcal{F}\times \mathcal{W}}$ such that $\boldsymbol{f_{ij}\overline{P}_{w}\lambda_\mathcal{W}(w)}$ and $\boldsymbol{wP_{ij}\lambda_\mathcal{W}(f_{ij'})}$ for each $\boldsymbol{j'\in J_{i}}$}.
There are two subcases to consider.

\noindent \textbf{If $\boldsymbol{w}$ makes a non-valid offer to $\boldsymbol{f_{ij}}$ at some step:} then, there is a stage $t$ and a worker $w'$ such that $w'=\lambda^{t}(f_{ij})P_{ij}w$, contradicting that $wP_{f_{ij}}\lambda_\mathcal{W}(f_{ij'})$ for each $j'\in J_i.$

\noindent \textbf{If $\boldsymbol{w}$ makes a valid offer to $\boldsymbol{f_{ij}}$ at some step:} Then, $w$ was rejected by $f_{ij}$ at some stage $k$. Thus, there is $w'\in \mathcal{W}$ such that $w'=\lambda^k(f_{ij})P_{ij}w$, contradicting that $wP_{ij}\lambda_\mathcal{W}(f_{ij'})$ for each $j'\in J_i.$
\end{enumerate}
Therefore, since none of the four cases hold, we have that  $\lambda_\mathcal{W}\in S^*(M).$
\end{proof}
   
\subsection{The Isomorphism between markets}\label{subseccion isomorfismo}
In this subsection, we prove that there is an isomorphism between the set of stable matchings in a many-to-one market where firm have path-independent choice functions, and the set of stable* matchings of an associated one-to-one market, which is derived from the Aizermann-Malishevski decomposition.

\begin{theorem}
\label{S.iso} For any Aizermann-Malishevski decomposition, there is an isomorphism between $S^*(M)$ and  $S({{\mathcal{M}}})$.
\end{theorem}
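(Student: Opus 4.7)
My plan is to establish the isomorphism by constructing explicit maps in both directions and verifying they preserve the respective stability notions. Given $\lambda\in S^{*}(M)$, define the associated many-to-one matching $\mu_{\lambda}$ by $\mu_{\lambda}(\varphi_{i}):=\{\lambda(f_{ij}) : j\in J_{i},\ \lambda(f_{ij})\neq \emptyset\}$ together with the induced $\mu_{\lambda}(w)$. In the reverse direction, given $\mu\in S(\mathcal{M})$, the stability identity $C_{\varphi_{i}}(\mu(\varphi_{i}))=\mu(\varphi_{i})=\bigcup_{j\in J_{i}}\max_{\mu(\varphi_{i})}P_{ij}$ guarantees every $w\in \mu(\varphi_{i})$ is the unique $P_{ij}$-maximum on $\mu(\varphi_{i})$ for at least one $j\in J_{i}$; pick any such $j_{w}$ (the uniqueness of each $P_{ij}$-max automatically forces distinct workers to receive distinct indices) and set $\lambda_{\mu}(f_{ij_{w}}):=w$, leaving every other copy unmatched.

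The first verification is $\mu_{\lambda}\in S(\mathcal{M})$. Individual rationality at $\varphi_{i}$ reduces to $\mu_{\lambda}(\varphi_{i})\subseteq C_{\varphi_{i}}(\mu_{\lambda}(\varphi_{i}))$: for each $w=\lambda(f_{ij})\in \mu_{\lambda}(\varphi_{i})$, the envy-free clause \eqref{cond.b.ii} of Definition \ref{def.block*} yields $w R_{ij}\lambda(f_{ij'})$ for every $j'\in J_{i}$, and since $\{\lambda(f_{ij'}) : j'\in J_{i}\}\setminus \{\emptyset\}=\mu_{\lambda}(\varphi_{i})$, we conclude $w=\max_{\mu_{\lambda}(\varphi_{i})}P_{ij}$ and hence $w\in C_{\varphi_{i}}(\mu_{\lambda}(\varphi_{i}))$. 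If a pair $(w,\varphi_{i})$ blocked $\mu_{\lambda}$, the Aizermann--Malishevski identity applied to $\mu_{\lambda}(\varphi_{i})\cup \{w\}$ would yield $j\in J_{i}$ with $w=\max_{\mu_{\lambda}(\varphi_{i})\cup \{w\}}P_{ij}$, giving $wP_{ij}\lambda(f_{ij'})$ for every $j'\in J_{i}$; combined with $\varphi_{i}P_{w}\mu_{\lambda}(w)$ (hence $f_{ij}\overline{P}_{w}\lambda(w)$ via condition \eqref{cond.i}), this produces a blocking* pair $(f_{ij},w)$ of $\lambda$, contradicting $\lambda\in S^{*}(M)$.

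The symmetric argument shows $\lambda_{\mu}\in S^{*}(M)$. Individual rationality* is immediate: each matched worker is acceptable to its copy ($w P_{ij_{w}}\emptyset$ since $w=\max_{\mu(\varphi_{i})}P_{ij_{w}}$), and workers' preferences on copies reflect $P_{w}$ via \eqref{cond.i}. Envy-freeness holds by construction, because every non-empty value of $\lambda_{\mu}$ on copies of $\varphi_{i}$ lies in $\mu(\varphi_{i})$ and $w$ dominates that set under $P_{ij_{w}}$. Finally, if $(f_{ij},w)$ blocked* $\lambda_{\mu}$, then \eqref{cond.b.iv} of Definition \ref{def de par bloqueante} forces $w P_{ij}w'$ for every $w'\in \mu(\varphi_{i})$ (with acceptability $wP_{ij}\emptyset$ extracted from the case $j'=j$), so $w\in \max_{\mu(\varphi_{i})\cup \{w\}}P_{ij}\subseteq C_{\varphi_{i}}(\mu(\varphi_{i})\cup \{w\})$; combined with $\varphi_{i}P_{w}\mu(w)$, this is a blocking pair of $\mu$, a contradiction.

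The main obstacle is showing that $\lambda\mapsto \mu_{\lambda}$ and $\mu\mapsto \lambda_{\mu}$ are genuine inverses. One direction, $\mu_{\lambda_{\mu}}=\mu$, is immediate by unwinding the definitions. The reverse is delicate because $\lambda_{\mu}$ depends on the specific choice of $j_{w}$ whenever several copies of $\varphi_{i}$ share the same $P_{ij}$-maximum on $\mu(\varphi_{i})$; I expect to resolve this by either fixing a canonical selection rule (e.g., the lex-smallest index $j_{w}$) to pin down a distinguished section of $\lambda\mapsto \mu_{\lambda}$, or by phrasing the isomorphism at the level of the equivalence classes generated by $\lambda\sim \lambda'\iff \mu_{\lambda}=\mu_{\lambda'}$. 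Either route yields the claimed correspondence and, as a byproduct, allows the Rural Hospital Theorem to be transferred from $S(\mathcal{M})$ to $S^{*}(M)$.
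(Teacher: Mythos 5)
Your construction is, in substance, the paper's own proof. The forward map $\lambda\mapsto\mu_\lambda$ is exactly the paper's $T$, your reverse map with a lex-smallest selection of $j_w$ is exactly the paper's $T^{-1}$ (which sets $\lambda(w)=f_{ij}$ for the smallest $j$ with $w=\max_{\mu(\varphi_i)}P_{ij}$), and your stability verifications coincide case by case with the paper's: individual rationality of $\mu_\lambda$ at firms from the negation of condition \eqref{cond.b.ii} of Definition \ref{def.block*}, blocking pairs of $\mu_\lambda$ converted into blocking* pairs of $\lambda$ via the Aizerman--Malishevski identity on $\mu_\lambda(\varphi_i)\cup\{w\}$, and the symmetric checks for $\lambda_\mu$. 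The only outright omission is small: you never verify worker individual rationality of $\mu_\lambda$ (that $\mu_\lambda(w)\mathrel{R_w}\emptyset$); it follows at once from condition \eqref{pref.c.1} on $\overline{P}_w$, as in the paper's case (i)(a).

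The place where you go beyond the paper is the inverse-verification issue, and you are right to insist on it: the paper's proof establishes that $T$ maps $S^*(M)$ into $S(\mathcal{M})$ and $T^{-1}$ maps $S(\mathcal{M})$ into $S^*(M)$, and then declares $T$ an isomorphism without checking that the two maps are mutually inverse. Your observation that $\mu_{\lambda_\mu}=\mu$ is immediate is correct (so $T$ is onto and $T^{-1}$ is injective), but the delicate direction is $\lambda_{\mu_\lambda}=\lambda$, i.e.\ the injectivity of $T$ on $S^*(M)$. Be aware that the canonical-selection repair you propose (and that the paper adopts) does \emph{not} settle this: it only makes $T^{-1}$ a well-defined function. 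If two copies $f_{ij}$ and $f_{ij'}$ of the same firm share the same maximum on $\mu_\lambda(\varphi_i)$, one can reassign that worker from the lower-indexed to the higher-indexed copy without violating stability*, because condition \eqref{cond.b.iv} of Definition \ref{def de par bloqueante} fails at the now-vacant copy (some copy of the same firm still employs that very worker). In the paper's own Example \ref{ejemplo}, the matching sending $w_1\mapsto f_{12}$, $w_2\mapsto f_{14}$, $w_3\mapsto f_{21}$, $w_4\mapsto f_{24}$ is stable* and distinct from $\lambda_{\mathcal{F}}$, yet has the same image $\mu_\Phi$ under $T$. So of your two proposed repairs, only the quotient by $\lambda\sim\lambda'\iff\mu_\lambda=\mu_{\lambda'}$ (equivalently, restricting $S^*(M)$ to the image of $T^{-1}$) actually yields a bijection; if you pursue this, state the theorem at that level rather than for all of $S^*(M)$.
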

\begin{proof} Fix any Aizermann-Malishevski decomposition and, then fix a one-to-one market $M$.
Given $\lambda \in S^*\left (M\right )$, define $T(\lambda )=\mu $ as 
$$\mu (\varphi _{i})=\bigcup \limits _{j\in J_{i}}\lambda (f_{ij})$$ for each $\varphi_i\in \Phi.$ 
Moreover,  if $w\in \mu (\varphi _i)$, then $\mu (w)=\varphi _i$.

Similarly, if $\mu \in S\left ({{{\mathcal{M}}}}\right ),$ define $T^{-1}(\mu )=\lambda $ as
$\lambda(w)=f_{ij}$ for $w=\max _{\mu (\varphi_{i})}P_{ij}$ such that there is no $ j^{\prime
}<j$ with $w=\max _{\mu (\varphi_{i})}P_{ij^{\prime }}$.
Moreover, if $\lambda (w)=f_{ij}$, then $\lambda (f_{ij})=w.$

Formally we need to show that, 
\begin{enumerate}
\item[\textbf{(i)}] \label{iso.c.1}
\textbf{Given $\boldsymbol{\lambda\in S^{\ast}\left(  M\right)  $, 
then $T(\lambda)=\mu\in
S\left(  \mathcal{M}\right)}$}. We prove this implication by proving the following two claims.

\begin{claim}
$\mu=T(\lambda)$ is a many-to-one matching.
\end{claim}

By definition of $T$, for each $w\in\mathcal{W}$, $w\in \mu(\varphi_i)$ implies that there is $j\in J_i$ such that $ w=\lambda(f_{ij})$. 
Assume $\mu$ is not a matching, then there is $\varphi_{i'}$ with $ i'\neq i$ such that $w\in\mu(\varphi_{i'})$. So, there is $ j'\in J_{i'}$ such that $w=\lambda(f_{i'j'})$. 
Thus, $w=\lambda(f_{ij})=\lambda(f_{i'j'})$ which contradicts $\lambda$ being a one-to-one matching.

The construction of $T$ ensures the bilateral nature of the assignment.

\begin{claim}
$\mu=T(\lambda)\in S^{\ast}\left(  M\right)$.
\end{claim}

By way of contradiction, assume $\mu\notin S(\mathcal{M})$. Then, we have three cases to analyze:

\begin{enumerate}[(a)]


\item 
If $\mu(w)=\varphi_i$, then there is  $j\in J_i$ 
such that $w=\lambda(f_{ij})$.
 Assume that $\emptyset P_w \varphi_i$, then by the construction of the one-to-one associated market (particularly condition \eqref{pref.c.1} on workers' preferences) it follows that $\emptyset \overline{P}_w f_{ij}$,  contradicting that $\lambda\in S^*(M)$.


\item 
Let $w\in\mu(\varphi_i)$. 
Assume that $w\notin C_{\varphi_i}(\mu(\varphi_i))$. 
Since $\mu(\varphi_i)=\bigcup_j\lambda(f_{ij})$ 
and 
$C_\varphi(\mu(\varphi_i))=\bigcup_j \max_{\mu(\varphi)} P_{ij}$, 
then there is $j'\in J$ such that $\lambda(f_{ij'})=w$ 
and 
$\max_{\mu(\varphi_i)} P_{ij'}\neq w$. 
Then, $\lambda$ is blocked by $f_{ij'}$ 
since it satisfies condition \eqref{cond.b.ii} 
of Definition \ref{def.block*}, contradicting that $\lambda\in S^*(M)$.


\item 
Assume that there is a blocking pair to $\mu$, i.e., there is a pair $ (\varphi_i, w)$ such that $\varphi_i P_w \mu(w) =\varphi_{i'}$ for some $i'$, and $w\in C_\varphi(\mu(\varphi_i)\cup\{w\}).$
Then, by the Aizermann-Malishevski decomposition, there is $ij'$ such that $w=\max_{\mu(\varphi_i)\cup\{w\}} P_{ij'}$ which implies that $w P_{ij'} \lambda(f_{ij''})$ for each $j''\in J_i\setminus\{j'\}$.
By the construction of the associated one-to-one market, if $\varphi_i P_w \varphi_{i'}$, then $f_{ij'}\overline{P}_w f_{i'\tilde{j}}$ for each $\tilde{j} \in J_{i'}$ (in particular for $\lambda(w)=f_{i'\hat{j}}$ with $\hat{j}\in J_{i'}$).
Hence, $(w,f_{ij'})$ is a blocking* pair to $\lambda$,  contradicting that $\lambda\in S^*(M)$.


\end{enumerate}

\item[\textbf{(ii)}] \label{iso.c.2}
\textbf{Given $\boldsymbol{\mu\in S\left(  \mathcal{M}\right)  $, 
then 
$T^{-1}(\mu)=\lambda\in S^{\ast}\left(M\right)}$}. We prove this implication by proving two more claims.

\begin{claim}
$\lambda$ is a one-to-one matching. 
\end{claim}

By way of contradiction, assume  $\lambda $ is not a matching. So, there is $w\in \mathcal{W}$ such that $\lambda(w)=f_{ij}$ and $\lambda(w)=f_{i'j'}$.
If $i\neq i'$, then $w\in \mu(\varphi_i)$ and $w\in \mu(\varphi_{i'})$, which would contradict that $\mu$ is a many-to-one matching.
If $i=i'$ and $j\neq j'$, then either $j<j'$ or vice-versa, making $\lambda$ contradict the construction of $T^{-1}(\mu)$.

The construction of $T^{-1}$ ensures the bilateral nature of the assignment.

\begin{claim}
$\lambda\in S^*(M).$
\end{claim}

By way of contradiction, assume  $\lambda\notin S^*(M)$.

\begin{enumerate}[(a)]


\item 
Assume that $\lambda$ is blocked* by a worker $w$.
If $\lambda(w)=f_{ij}$ then $w\in \mu(\varphi_i)$. 
Since $\mu\in S(\mathcal{M})$, then $\varphi_i P_w \emptyset $. 
By construction of the preferences in the associated one-to-one market, it follows that $f_{ij} \overline{P}_w \emptyset$ for each $j\in J_i$, a contradiction.


\item
Assume that $\lambda$ is blocked* by a firm-copy.
Since $\mu\in S(\mathcal{M})$ then $C_{\varphi_i}(\mu(\varphi_i))=\mu(\varphi_i)=\bigcup_{j\in J_i}\max_{\mu(\varphi_i)} P_{ij}$. 
By construction of $T^{-1}$ if $\lambda(f_{ij'})=w$, then $\max_{\mu(\varphi_i)} P_{ij'}=w P_{ij'} \lambda(f_{ij''})$ for each $ j''\in J_i\setminus\{j'\}$ and $w P_{ij'} \emptyset$, a contradiction. 

\item Assume that there are $f_{ij}$ and $f_{ij'}$ such that $\lambda(f_{ij'})P_{ij}\lambda(f_{ij})P_{ij} \emptyset.$ Since $\mu\in  S(\mathcal{M})$, and $T^{-1}(\mu)=\lambda$, we have that for each $w\in \mu(\varphi_i)$ there is $P_{ij}$ such that 
$w=\max_{\mu(\varphi_i)} P_{ij}$ and $w=\lambda(f_{ij}).$ Then, for each $f_{ij}$ such that $\lambda(f_{ij})P_{ij} \emptyset$ there is no $f_{ij'}$ such that $\lambda(f_{ij'})P_{ij}\lambda(f_{ij})$, a contradiction. 
\item
Assume that $(w,f_{ij})$ form a blocking* pair, i.e., $w P_{ij}  \lambda(f_{ij'})$ for each $j'\in J_i$ 
and 
$f_{ij} \overline{P}_w \lambda(w)$.\footnote{
W.l.o.g. assume $j=\min\{j''\in J_i$ such that $w P_{ij''} \lambda(f_{ij'})$ for each $j'\in J_i\}.$ Also assume that some firm-copy is matched, otherwise stability and stability* coincide.}
By construction of $T^{-1}$ if $\lambda(f_{ij})\neq\emptyset$ then $\lambda(f_{ij})=\max_{\mu(\varphi_i)} P_{ij}$. 
It follows that $w P_{ij} \left[\max_{\mu(\varphi_i)} P_{ij}\right]$, which in turn implies that
$w=\max_{\mu(\varphi_i)\cup\{w\}} P_{ij}$. 
Then, we have that $w\in C_{\varphi_i}(\mu(\varphi_i)\cup\{w\})=\bigcup_{j'\in J_i} \max_{\mu(\varphi_i)\cup\{w\}} P_{ij'}$. 
By construction of the associated one-to-one market, $f_{ij} \overline{P}_w f_{i'j''}$ implies $\varphi_i P_w \varphi_{i'}$. 
Then, $(w,\varphi_i)$ is a blocking pair to $\mu$, contradicting that $\mu\in S(\mathcal{M})$.
\end{enumerate}
\end{enumerate}
By (i) and (ii), $T$ is an isomorphism between the set of stable many-to-one matchings and the set of stable* one-to-one matchings.
\end{proof}

It is important to highlight that this previous result does not depend on the order of the indices used in the Aizerman-Malishevski decomposition; therefore, it holds regardless of which particular decomposition is considered. This implies that, for any such decomposition, the set of stable* matchings is isomorphic to the set of stable matchings in the original many-to-one market, and hence the various stable* sets corresponding to different one-to-one markets---obtained from different Aizerman-Malishevski decompositions---are also isomorphic to each other.

The following example illustrates the isomorphism between the many-to-one market and the related one-to-one market of Example \ref{ejemplo}.
\medskip

\noindent \textbf{Example \ref{ejemplo} (continued)} \textit{ Recall that the stable matchings of the many-to-one subsitutable market are:}
\begin{align*}
    \mu_\Phi=
\setcounter{MaxMatrixCols}{14}
\begin{pmatrix}
 \varphi_1 & \varphi_2  \\
 w_1w_2 &w_3w_4
 \end{pmatrix}
     \mu_1=
\setcounter{MaxMatrixCols}{14}
\begin{pmatrix}
 \varphi_1 & \varphi_2  \\
 w_1w_3 &w_2w_4
 \end{pmatrix}\\
     \mu_2=
\setcounter{MaxMatrixCols}{14}
\begin{pmatrix}
 \varphi_1 & \varphi_2  \\
 w_2w_4 &w_1w_3
 \end{pmatrix}
     \mu_\mathcal{W}=
\setcounter{MaxMatrixCols}{14}
\begin{pmatrix}
 \varphi_1 & \varphi_2  \\
 w_3w_4 &w_1w_2
 \end{pmatrix}
\end{align*}

\noindent\textit{The stable* matchings of the associated one-to-one market are:} 
$$
\lambda_\mathcal{F}=
\setcounter{MaxMatrixCols}{14}
\begin{pmatrix}
 f_{11} & f_{12} & f_{13} & f_{14} & f_{15} & f_{16} & f_{21} & f_{22} & f_{23} & f_{24} &f_{25} & f_{26} \\
 w_1 &\emptyset & \emptyset & w_2 & \emptyset & \emptyset & w_3 & \emptyset & \emptyset & w_4 & \emptyset & \emptyset 
 \end{pmatrix}$$
$$ \lambda_1=
\setcounter{MaxMatrixCols}{14}
\begin{pmatrix}
  f_{11} & f_{12} & f_{13} & f_{14} & f_{15} & f_{16} & f_{21} & f_{22} & f_{23} & f_{24} &f_{25} & f_{26} \\
 w_2 &\emptyset & w_4 &\emptyset & \emptyset & \emptyset & w_3 & \emptyset & \emptyset & \emptyset & \emptyset & w_1 
 \end{pmatrix}$$
$$ \lambda_2=
\setcounter{MaxMatrixCols}{14}
\begin{pmatrix}
 f_{11} & f_{12} & f_{13} & f_{14} & f_{15} & f_{16} & f_{21} & f_{22} & f_{23} & f_{24} &f_{25} & f_{26} \\
 w_1 &\emptyset & \emptyset &\emptyset & \emptyset &  w_3 & w_4& \emptyset & w_2  & \emptyset & \emptyset & \emptyset 
 \end{pmatrix}$$

 $$   \lambda_\mathcal{W}=
\setcounter{MaxMatrixCols}{14}
\begin{pmatrix}
 f_{11} & f_{12} & f_{13} & f_{14} & f_{15} & f_{16} & f_{21} & f_{22} & f_{23} & f_{24} &f_{25} & f_{26} \\
 w_3 &w_4 & \emptyset & \emptyset & \emptyset & \emptyset & w_2 & w_1 & \emptyset &\emptyset & \emptyset & \emptyset 
 \end{pmatrix}$$
   
\noindent \textit{It easy to see that $T(\lambda_\mathcal{F})=\mu_\Phi$, $T(\lambda_1)=\mu_1$, $T(\lambda_2)=\mu_2$, and $T(\lambda_\mathcal{W})=\mu_\mathcal{W}.$}\hfill $\Diamond$\medskip

One of the most significant results in the matching literature is the well-known \emph{Rural Hospital Theorem}. In many-to-one markets, this result requires not only that firms’ are endowed with path-independent choice functions, but also that they obey the \emph{Law of Aggregate Demand (LAD)}. The LAD states that when a firm chooses from a larger set of workers, it selects at least as many as it did from any of its subsets. Formally, given $\varphi \in \Phi$, if $W'' \subseteq W' \subseteq W$, then $|C_\varphi(W'')| \leq |C_\varphi(W')|$.\footnote{This property was first studied by \citet{Alkan2002} under the name \textit{cardinal monotonicity}. See also \citet{Hatfield2005}.}

The version of the Rural Hospital Theorem for many-to-one markets with path-independent choice functions satisfying the LAD states that \emph{each agent is matched with the same number of partners in every stable matching.}\footnote{The Rural Hospital Theorem has been established in various contexts by multiple authors \citep[see][among others]{McVitie1971, Roth1984a, Roth1985a, Martinez2000}. The version that applies to our setting is the one presented by \citet{Alkan2002} for many-to-many markets.}

The following theorem provides an adaptation of this classical result to the associated one-to-one market. Its proof relies fundamentally on the existence of the isomorphism with the many-to-one setting.
\begin{theorem}[The adapted Rural Hospital Theorem]
    The number of matched firm-copies of a firm remains the same in every stable* matching.
\end{theorem}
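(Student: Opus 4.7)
My plan is to derive this statement as a direct corollary of Theorem~\ref{S.iso} combined with the classical Rural Hospital Theorem for the many-to-one market $\mathcal{M}$. The hypothesis (implicit from the preceding discussion) is that each firm's path-independent choice function satisfies the Law of Aggregate Demand, which is exactly what is needed to invoke the many-to-one version of the theorem.

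First, I would fix a firm $\varphi_i$ and an arbitrary stable* matching $\lambda \in S^*(M)$, and set $\mu = T(\lambda) \in S(\mathcal{M})$. The core observation is that the number of matched copies of $\varphi_i$ under $\lambda$ equals $|\mu(\varphi_i)|$. To see this, note that by the definition of $T$ we have $\mu(\varphi_i) = \bigcup_{j \in J_i} \lambda(f_{ij})$, and since $\lambda$ is a \emph{one-to-one} matching, two distinct copies $f_{ij}$ and $f_{ij'}$ of the same firm cannot be matched to the same worker (otherwise $\lambda(f_{ij}) = \lambda(f_{ij'})$ as an element of $\mathcal{W}$ would force $f_{ij} = f_{ij'}$ via the bilateral condition $\lambda(w) = f \iff \lambda(f) = w$). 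Hence the map $j \mapsto \lambda(f_{ij})$, restricted to indices with $\lambda(f_{ij}) \neq \emptyset$, is an injection onto $\mu(\varphi_i)$, giving
\begin{equation*}
\bigl|\{\,j \in J_i : \lambda(f_{ij}) \neq \emptyset\,\}\bigr| \;=\; |\mu(\varphi_i)|.
\end{equation*}

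Next, I would invoke the many-to-one Rural Hospital Theorem stated just before this result: since each $C_{\varphi_i}$ is path-independent and satisfies LAD, the quantity $|\mu(\varphi_i)|$ is the same for every $\mu \in S(\mathcal{M})$. Because Theorem~\ref{S.iso} establishes that $T : S^*(M) \to S(\mathcal{M})$ is a bijection, every $\lambda, \lambda' \in S^*(M)$ correspond to stable many-to-one matchings $T(\lambda), T(\lambda') \in S(\mathcal{M})$, and the identity displayed above immediately transfers the invariance to the one-to-one side, yielding that the number of matched copies of $\varphi_i$ under $\lambda$ equals that under $\lambda'$.

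I do not anticipate a serious obstacle here: the argument is essentially a counting translation through the isomorphism. The only delicate step is verifying the bijection between matched copies of $\varphi_i$ in $\lambda$ and elements of $\mu(\varphi_i)$, which reduces to the basic fact that $\lambda$ is one-to-one; everything else is supplied by Theorem~\ref{S.iso} and the classical Rural Hospital Theorem for many-to-one markets with path-independent choice functions satisfying LAD.
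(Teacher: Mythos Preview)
Your proposal is correct and follows essentially the same approach as the paper: both arguments translate the question to the many-to-one market via the isomorphism $T$ of Theorem~\ref{S.iso} and then invoke the many-to-one Rural Hospital Theorem under LAD. The only cosmetic differences are that the paper phrases it by contradiction and leaves the equality $|\{\,j:\lambda(f_{ij})\neq\emptyset\,\}|=|\mu(\varphi_i)|$ implicit, whereas you argue directly and spell out this counting step.
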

\begin{proof}
  By way of contradiction, assume that there are two stable* matchings $\lambda$ and $\lambda'$ such that the number of matched firm-copies differs. That is, there is a firm $\varphi_i \in \Phi$ for which
\begin{equation}\label{ecu 1 para THR} |f_{ij} \in \mathcal{F}_i : \lambda(f_{ij}) \neq \emptyset | \neq |f_{ij} \in \mathcal{F}_i : \lambda'(f_{ij}) \neq \emptyset|. \end{equation}
where $\mathcal{F}_i=\{f_{ij}\in \mathcal{F}:j\in J_i\}$, i.e., the set of firm-copies of firm $\varphi_i$. Now, by Theorem~\ref{S.iso}, there  are stable matchings $\mu = T(\lambda)$ and $\mu' = T(\lambda')$ in the many-to-one market. Then, by \eqref{ecu 1 para THR}, it follows that $|\mu(\varphi_i)| \neq |\mu'(\varphi_i)|$, contradicting the many-to-one version of the Rural Hospital Theorem.
\end{proof}

It is worth highlighting that this result is also independent of a particular Aizerman-Malishevski decomposition. Therefore, it applies not only to the set of stable* matchings of a given associated one-to-one market, but also uniformly across all possible associated one-to-one markets arising from different decompositions. Formally, 
\begin{corollary}
    For any firm, the number of its matched copies is invariant across all stable* matchings in every associated one-to-one market.
\end{corollary}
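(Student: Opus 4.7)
The proof plan is to reduce the claim to two facts already established: the isomorphism of Theorem \ref{S.iso} and the many-to-one Rural Hospital Theorem, exploiting the key observation that the many-to-one market $\mathcal{M}$ is intrinsic to the data $(\Phi,\mathcal{W},C,P)$ and does not depend on which Aizerman--Malishevski decomposition is chosen.

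First, I would fix a firm $\varphi_i \in \Phi$ and two (possibly different) Aizerman--Malishevski decompositions yielding associated one-to-one markets $M$ and $M'$. Let $\lambda \in S^*(M)$ and $\lambda' \in S^*(M')$ be arbitrary stable* matchings. The goal is to show that
\begin{equation*}
|\{f_{ij}\in \mathcal{F}_i : \lambda(f_{ij})\neq \emptyset\}| = |\{f'_{ij}\in \mathcal{F}'_i : \lambda'(f'_{ij})\neq \emptyset\}|,
\end{equation*}
where $\mathcal{F}_i$ and $\mathcal{F}'_i$ denote the sets of firm-copies of $\varphi_i$ in $M$ and $M'$ respectively.

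Second, I would apply the isomorphism $T$ from Theorem \ref{S.iso} separately to each market, obtaining stable matchings $\mu:=T(\lambda)\in S(\mathcal{M})$ and $\mu':=T(\lambda')\in S(\mathcal{M})$. The crucial point here is that both $\mu$ and $\mu'$ live in the \emph{same} many-to-one market $\mathcal{M}=(\Phi,\mathcal{W},C,P)$, since $\mathcal{M}$ is independent of the chosen decomposition. By the very definition of $T$, we have $\mu(\varphi_i)=\bigcup_{j\in J_i}\lambda(f_{ij})$, and since $\lambda$ is a one-to-one matching the nonempty values $\lambda(f_{ij})$ are pairwise distinct workers; hence $|\mu(\varphi_i)|$ equals the number of matched copies of $\varphi_i$ under $\lambda$. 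The same holds for $\lambda'$ and $\mu'$ in $M'$.

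Third, I would invoke the many-to-one Rural Hospital Theorem (which applies since the firms' choice functions are path-independent and satisfy the LAD, the hypothesis inherited by the corollary from the preceding theorem). This yields $|\mu(\varphi_i)|=|\mu'(\varphi_i)|$, which by the previous step is exactly the equality of the numbers of matched copies of $\varphi_i$ in $\lambda$ and $\lambda'$. Since the decompositions, the markets, and the stable* matchings were arbitrary, this completes the proof. There is no real obstacle here beyond carefully recording the observation that $T$ preserves the cardinality $|\mu(\varphi_i)|$, because the one-to-one constraint of $\lambda$ guarantees that distinct copies of a firm are matched to distinct workers; everything else is a direct appeal to the previously proved isomorphism and to the classical Rural Hospital Theorem.
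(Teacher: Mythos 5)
Your proposal is correct and follows essentially the same route the paper intends: the paper proves the adapted Rural Hospital Theorem by pushing two stable* matchings through $T$ into the common many-to-one market $\mathcal{M}$ and invoking the classical Rural Hospital Theorem, and the corollary is obtained by observing that this argument is indifferent to which Aizerman--Malishevski decomposition produced each one-to-one market. Your only addition is to record explicitly that $T$ preserves the count of matched copies because a one-to-one $\lambda$ matches distinct copies to distinct workers --- a detail the paper leaves implicit but which is needed and correct.
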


\section{Final remarks}\label{seccion final remaks}

In this paper, we present a method to decompose a many-to-one market where firms have path-independent choice functions into a one-to-one market using the Aizerman-Malishevski decomposition. We define a notion of stability*  for an associated one-to-one market that is adjusted to show that the set of stable matchings in a many-to-one market and the set of stable* matchings in an associated one-to-one market are isomorphic. Furthermore, we present an adaptation of the deferred acceptance algorithm  \citep[originally introduced by][]{Gale1962} for our associated one-to-one market. Regardless of which side proposes --be it the firm-copies or the workers-- a stable* matching is always obtained. Although the fact that both markets are isomorphic already indicates that the set of stable* matchings in the related one-to-one market is non-empty, the algorithm provides an alternative proof of this fact.

One of the central results in matching theory is the Rural Hospital Theorem, which in many-to-one markets requires path-independent choice functions satisfying the Law of Aggregate Demand (LAD). We adapt this result to our one-to-one setting, showing that the number of matched firm-copies remains invariant across all stable* matchings. This invariance holds independently of the particular Aizerman-Malishevski decomposition used, applying uniformly across all associated one-to-one markets.

A common result when using the deferred acceptance algorithm across all matching markets is that the output is proposing-side optimal: if, for instance, the firms are the ones making the offers, the output of the algorithm is the firm-optimal stable matching. Unfortunately, this is not valid in our associated one-to-one market.
 Recall from Example \ref{ejemplo} the stable* matchings resulting from the adapted deferred acceptance algorithm when both the firm-copies and the workers propose:

 $$
\lambda_\mathcal{F}=
\setcounter{MaxMatrixCols}{14}
\begin{pmatrix}
  f_{11} & f_{12} & f_{13} & f_{14} & f_{15} & f_{16} & f_{21} & f_{22} & f_{23} & f_{24} &f_{25} & f_{26} \\
 w_1 &\emptyset & \emptyset & w_2 & \emptyset & \emptyset & w_3 & \emptyset & \emptyset & w_4 & \emptyset & \emptyset 
 \end{pmatrix}$$
 and 
 $$   \lambda_\mathcal{W}=
\setcounter{MaxMatrixCols}{14}
\begin{pmatrix}
 f_{11} & f_{12} & f_{13} & f_{14} & f_{15} & f_{16} & f_{21} & f_{22} & f_{23} & f_{24} &f_{25} & f_{26} \\
 w_3 &w_4 & \emptyset & \emptyset & \emptyset & \emptyset & w_2 & w_1 & \emptyset &\emptyset & \emptyset & \emptyset 
 \end{pmatrix},$$
 respectively. If we observe, for instance, the firm \(f_{12}\), we find that 
 \begin{equation}\label{ecuacion en final remarks}
     w_4 = \lambda_{\mathcal{W}}(f_{12}) P_{12} \lambda_{\mathcal{F}}(f_{12})= \emptyset,
 \end{equation} indicating that the optimality of $\lambda_{\mathcal{F}}$ fails.

 Let us consider the following result on isomorphic lattices: \emph{Given two partially ordered sets $(A, >_A)$ and $(B, >_B)$ and an isomorphism $\mathcal{I}$ between the sets $A$ and $B$ that preserves the order: for $a, b \in A$ such that $a >_A b$, it follows that $\mathcal{I}(a) >_B \mathcal{I}(b)$, if $(A, >_A)$ has a lattice structure, then $(B, >_B)$ also has a lattice structure}  \citep[see][for a thorough treatment of lattice theory]{Birkhoff1967}.

If we consider the orders $P_\mathcal{W}$ and $\overline{P}_\mathcal{W}$ induced by workers' preferences in the many-to-one and related one-to-one markets, we can assert that the set of stable* matchings has a lattice structure with respect to the induced order $\overline{P}_\mathcal{W}$ since the ordering between matchings is preserved. However, the previous example also shows that the same cannot be stated for the firm-copies side. Despite that $\mu_\Phi$ is unanimously preferred (either considering partial order induced by firms' preferences or considering Blair's partial order \citep{Blair1988}) by all firms to  $\mu_\mathcal{W}$ in the many-to-one market, $\mu_\Phi = T(\lambda_\mathcal{F})$, and $ \mu_\mathcal{W}= T(\lambda_\mathcal{W})$, \eqref{ecuacion en final remarks} shows that the isomorphisms $T$ do not preserve the order. An aspect to consider in future work is the definition of an appropriate ordering over the set of stable* matchings based on firm-copies preferences, which could allow for proving that this set admits a lattice structure.


\begin{thebibliography}{14}
\newcommand{\enquote}[1]{``#1''}
\expandafter\ifx\csname natexlab\endcsname\relax\def\natexlab#1{#1}\fi

\bibitem[\protect\citeauthoryear{Aizerman and Malishevski}{Aizerman and
  Malishevski}{1981}]{aizerman1981general}
\textsc{Aizerman, M. and A.~Malishevski} (1981): \enquote{General theory of
  best variants choice: Some aspects,} \emph{IEEE Transactions on Automatic
  Control}, 26, 1030--1040.

\bibitem[\protect\citeauthoryear{Alkan}{Alkan}{2002}]{Alkan2002}
\textsc{Alkan, A.} (2002): \enquote{A class of multipartner matching markets
  with a strong lattice structure,} \emph{Economic Theory}, 19, 737--746.

\bibitem[\protect\citeauthoryear{Birkhoff}{Birkhoff}{1967}]{Birkhoff1967}
\textsc{Birkhoff, G.} (1967): \emph{Lattice theory}, American Mathematical
  Society.

\bibitem[\protect\citeauthoryear{Blair}{Blair}{1988}]{Blair1988}
\textsc{Blair, C.} (1988): \enquote{The Lattice Structure of the Set of Stable
  Matchings with Multiple Partners,} \emph{Mathematics of Operations Research},
  13, 619--628.

\bibitem[\protect\citeauthoryear{Chambers and Yenmez}{Chambers and
  Yenmez}{2017}]{chamb2017choice}
\textsc{Chambers, C.~P. and M.~B. Yenmez} (2017): \enquote{Choice and
  matching,} \emph{American Economic Journal: Microeconomics}, 9, 126--147.

\bibitem[\protect\citeauthoryear{Gale and Shapley}{Gale and
  Shapley}{1962}]{Gale1962}
\textsc{Gale, D. and L.~Shapley} (1962): \enquote{College admissions and the
  stability of marriage,} \emph{American Mathematical Monthly}, 69, 9--15.

\bibitem[\protect\citeauthoryear{Hatfield and Milgrom}{Hatfield and
  Milgrom}{2005}]{Hatfield2005}
\textsc{Hatfield, J.~W. and P.~R. Milgrom} (2005): \enquote{Matching with
  Contracts,} \emph{American Economic Review}, 95, 913--935.

\bibitem[\protect\citeauthoryear{Mart{\'{i}}nez, Mass{\'{o}}, Neme, and
  Oviedo}{Mart{\'{i}}nez et~al.}{2000}]{Martinez2000}
\textsc{Mart{\'{i}}nez, R., J.~Mass{\'{o}}, A.~Neme, and J.~Oviedo} (2000):
  \enquote{Single Agents and the Set of Many-to-One Stable Matchings,}
  \emph{Journal of Economic Theory}, 91, 91--105.

\bibitem[\protect\citeauthoryear{Mart{\'\i}nez, Mass{\'o}, Neme, and
  Oviedo}{Mart{\'\i}nez et~al.}{2008}]{martinez2008invariance}
\textsc{Mart{\'\i}nez, R., J.~Mass{\'o}, A.~Neme, and J.~Oviedo} (2008):
  \enquote{On the invariance of the set of stable matchings with respect to
  substitutable preference profiles,} \emph{International Journal of Game
  Theory}, 36, 497--518.

\bibitem[\protect\citeauthoryear{McVitie and Wilson}{McVitie and
  Wilson}{1971}]{McVitie1971}
\textsc{McVitie, D.~G. and L.~B. Wilson} (1971): \enquote{Three procedures for
  the stable marriage problem,} \emph{Communications of the ACM}, 14, 491--492.

\bibitem[\protect\citeauthoryear{Moulin}{Moulin}{1985}]{Moulin1985}
\textsc{Moulin, H.} (1985): \enquote{Choice functions over a finite set: A
  summary,} \emph{Social Choice and Welfare}, 2, 147--160.

\bibitem[\protect\citeauthoryear{Roth}{Roth}{1984}]{Roth1984a}
\textsc{Roth, A.} (1984): \enquote{The Evolution of the Labor Market for
  Medical Interns and Residents: A Case Study in Game Theory,} \emph{Journal of
  Political Economy}, 92, 991.

\bibitem[\protect\citeauthoryear{Roth}{Roth}{1985}]{Roth1985a}
---\hspace{-.1pt}---\hspace{-.1pt}--- (1985): \enquote{The college admissions
  problem is not equivalent to the marriage problem,} \emph{Journal of Economic
  Theory}, 36, 277--288.

\bibitem[\protect\citeauthoryear{Roth and Sotomayor}{Roth and
  Sotomayor}{1989}]{rothsotomayot1989}
\textsc{Roth, A. and M.~Sotomayor} (1989): \enquote{The College Admissions
  Problem Revisited,} \emph{Econometrica}, 57, 559--570.

\end{thebibliography}
\end{document}